 \newtheorem{prop}{Proposition}
 \newtheorem{lem}{Lemma}
\newcommand{\mt}[1]{\mathtt{#1}}
\newcommand{\ds}{\, \mbox{d}\mt{s}}
\newcommand{\cond}{|\!|}
\newcommand{\femp}{f^{\textit{emp}}_T}
\newcommand{\mfemp}{\mathop{f^{\textit{emp}}_T}}
\newcommand{\mbt}[1]{\boldsymbol{\mathtt{#1}}}
\newcommand{\domg}{\, \mbox{d}\mbt{\omega}}
\numberwithin{equation}{section}
\numberwithin{theorem}{section}
\numberwithin{prop}{section}
\numberwithin{lem}{section}
\begin{document}

\title{A spatio-spectral hybridization for edge preservation and noisy image restoration via local parametric mixtures and Lagrangian relaxation}
\author{Kinjal Basu$^1$ \and Debapriya Sengupta$^2$}
\institute{Department of Statistics,\\
Stanford University, USA.\\
Email : kinjal@stanford.edu
\vspace{2mm}
\and
Applied Statistics Unit, \\
Indian Statistical Institute, India.\\
Email : dps@isical.ac.in
}
\maketitle

\begin{abstract}\label{abstract}
This paper investigates a fully unsupervised statistical method for edge preserving image restoration and compression using a spatial decomposition scheme. Smoothed maximum likelihood is used for local estimation of edge pixels from mixture parametric models of local templates. For the complementary smooth part the traditional $L_2$-variational problem is solved in the Fourier domain with Thin Plate Spline (TPS) regularization  \cite{Grace}. It is well known that naive Fourier compression of the whole image fails to restore a piece-wise smooth noisy image satisfactorily due to Gibbs phenomenon \cite{Hewitt}. Images are interpreted as relative frequency histograms of samples from  bi-variate densities where the sample sizes might be unknown. The set of discontinuities is assumed to be completely unsupervised Lebesgue-null, compact subset of the plane in the continuous formulation of the problem. Proposed spatial decomposition uses a widely used topological concept, {\it partition of unity \/} \cite{Munkres,Tu}. The decision on edge pixel neighborhoods are made based on the multiple testing procedure of \cite{Holms,TSH}. Statistical summary of the final output is decomposed into two layers of information extraction, one for the subset of edge pixels and the other for the smooth region. Robustness is also demonstrated by applying the technique on noisy degradation of clean images. \\\\
\textit{Keywords} : Edge preserving smoother, Semiparametric mixture model, Partition of unity,  MISE, Variational optimization, Thin Plate Splines, Spectral embedding, Local template models, Multiple hypothesis testing. \\\\
\textbf{AMS Subject Classification}: Primary : 68U10, 62G07, 62H15. Secondary : 62P30, 65D10, 65D15.
\end{abstract}

\section{Introduction}\label{intro}

\subsection{Data generation process} 
A monochromatic digital image can be regarded as a surface of the image intensity function defined at pixels of the image, with possible edges visually demarcating outlines of embedded objects in an otherwise smooth background. In a variety of applications of image analysis it is important to recognize presence of objects of interests in images. Identification of edge pixels plays an important role towards this objective. Edges can be classified into different categories depending on the behavior of the image in its neighborhood. Because of this reason edges are local features of an image (see \cite{Pratt}). Edge detection and image restoration problems in image processing are closely related to the jump surface estimation problem in statistics \cite{Qiu,Qiu2}.

In this article we consider an image as a bi-variate histogram data with pixels as bins where samples are drawn from a bi-variate density $f$ defined on $ \mathcal{D}=[0, 1] \times [0, 1]$. The unknown density $f$ is assumed to lack global smoothness properties and may have various forms of discontinuities like jumps and edges. Let $ \mathtt{s} = (x, y)\in \mathcal{D}$ denote a generic point in $\mathcal{D}$ and suppose we have a sample $\xi_1, \xi_2, \ldots, \xi_T$ from the unknown density $f$. Without loss of generality consider an $M\times M$ equispaced discretization of the domain $\mathcal{D}$ into $D=M^2$ pixels (squares) centered at \textit{sites} $\mt{s}_1, \mt{s}_2, \ldots, \mt{s}_D$ where $\mt{s}_i$'s are some suitable enumeration of the (\textit{sites}) given by

$$ \mathcal{G} = \left\{ \left(\frac{i+1/2}{M},\frac{j+1/2}{M} \right): 0 \leq i,j \leq (M-1)   \right\}.
$$
We shall denote the pixels by $\mt{S}_i \subset \mathcal{D}$. The choice of boundary of pixels $S_i$ is not crucial to our analysis as we work with continuous probability distributions. The collection of vertical and horizontal lines of the basic grid form a set of measure $0$. However, while implementing the algorithm it is convenient to work with a consistent convention. For theoretical discussions we shall assume that $\mt{S}_i$'s are open rectangles of the form $ (i, i+1) \times (j, j+1)$ and hence they are disjoint. The set $C = \mathcal{D}\backslash \left(\cup_{i \in \mathcal{G}}\, \mt{S}_i \right)$, consisting of the vertical and horizontal grid boundaries, is a set of measure $0$ . Finally, let $Y(\mt{s}_1), Y(\mt{s}_2), \ldots, Y(\mt{s}_D)$ denote the counts in different subsets $\mt{S}_1, \mt{S}_2, \ldots, \mt{S}_D$ respectively. The joint distribution of the observation vector $Y = ( \, Y(\mt{s}_1), Y(\mt{s}_2), \ldots, Y(\mt{s}_D)\,)$ is a multinomial distribution with parameters $(T, \Pi_f)$ where $\Pi_f$ is a probability mass function indexed by $\mt{s}_i$ with
\begin{equation}
\Pi_f(\mt{s}_i) = \Pr\nolimits_f\{ \xi_1 \in \mt{S}_i \} = \int_{\mt{S}_i} \, f(\mt{s}) \,\mbox{d}\mathtt{s},
\label{model1}
\end{equation}
for $i=1,2,\ldots, D$. Here $\mbox{d}\mathtt{s}$ indicates integration with respect to the Lebesgue measure on $\mathcal{D}$. Note that $\Pi_f$ is a probability mass function over $\mathcal{G}$ which is the \textit{finest} parametrization of the unknown density $f$, that is, given the discretization $f$ cannot be recovered at resolutions finer than the pixel level averages. 

The model given by \eqref{model1} is slightly different from the traditional function estimation model with additive Gaussian white noise  \cite{Chu,Donoho2,Qiu2}. The common variance of the errors determines the precision of the image. The formulation considered here also adheres to the principles of digital image processing \cite{Pratt}. The total number of pixels $M^2$ is indicative of the designed precision of the imaging device. For satellite imagery in different spectral channels $M$ relates to the gridding of the target area on the ground for which image is supposed to be sufficiently precise (such as $3 \times 3$m$^2$ or $6 \times 6$m$^2$). The more precise the technology the larger value is assumed by $M$ (also lesser bias in the estimation of $f$). On the other hand the total number of samples $T$ relates to other noisy disturbance present in the channel. The lesser the noise the larger would be the value of $T$. We discuss it in more detail in section 2. From asymptotic considerations it has been shown that both formulations lead to equivalent sequence of experiments in the sense of Le Cam's measure of deficiency \cite{Brown,Nussbaum}. There is one glitch though. The drift function in the additive version of the density estimation problem becomes $f^{1/2}$.
\subsection{Semi-parametric mixture model for $\boldsymbol{f}$}
Towards this let $N(\mt{s}) \subset \mathcal{D}$ denote an open square with center $\mt{s} \in \mathcal{D}$. These open squares form a basis for the topology of the interior $(0,1) \times (0,1)$. Because the boundaries are assumed to have measure zero (as we assume the existence of density under Lebesgue measure) it is enough to restrict to the Borel $\sigma-$field of int$(\mathcal{D})$ for approximating image intensity function $f$. The extension to the whole of $\mathcal{D}$ is straightforward by treating $\mathcal{D}$ as a topological subspace of $\mathbb{R}^2$ \cite{Munkres}. Let $\mathcal{O}$ denote the collection of open squares of the form $N(\mt{s})$.

For any finite collection $N_1, N_2, \ldots, N_k \in \mathcal{O}$ and any compact set $A \subset \cup_i N_i$, let $ \{ (\rho_1, N_1),$ $ (\rho_2, N_2),\ldots, (\rho_k, N_k) \}$ be a \textit{smooth partition of unity} subordinate to the collection of open neighborhoods $N_i$'s (\textit{cf.} Theorem 3-11, \cite{Spivak} for its existence and properties). If the context is clear a partition of unity will be denoted by $(\rho_1, \rho_2, \ldots, \rho_k)$ to keep notations simple. Specifically, $\rho_i$'s are kernel-like smooth functions defined on $\mathcal{D}$ with $\rho_i$ vanishing outside $N_i$. The following properties are very useful.
\begin{align}\label{partition}
&\mbox{supp }\rho_i \subset N_i,& \nonumber \\
&0 \leq \rho_1(\mt{s})+ \rho_2(\mt{s})+ \ldots + \rho_k(\mt{s}) \leq 1 \mbox{ for all } \mt{s}\in \mathcal{D},  \\
&\rho_1(\mt{s})+ \rho_2(\mt{s})+ \ldots + \rho_k(\mt{s}) = 1 \mbox{ for all } \mt{s} \in A. \nonumber
\end{align} 

Define $\rho_0(\mt{s}) = 1 - \sum_i^k \rho_i(\mt{s})$, for all $\mt{s}\in \cal{D}$. Then $\rho_0(\mt{s})+\rho_1(\mt{s}) + \ldots + \rho_k(\mt{s}) \equiv 1 $ for every $\mt{s} \in \cal{D}$. Note that $\rho_0$ assumes the value $0$ on the compact set $A$. For the purpose of implementation any function defined on a continuous domain will be approximated by its average value at the finest resolution, that is, at the pixels $S_i$'s. 

The idea behind partition of unity is to decompose the domain of function into regions based on contrasting properties like local irregularities (such as high local oscillations or H\"{o}lder continuity of index $< 1$) and global smoothness (such as uniform twice differentiability). Edges tend to appear as boundaries between these contrasting regions. For this purpose indicator type partitions are not suitable as the artifice jeopardize the properties we are looking for. On the other hand smooth partitions of unity do not disturb the smoothness profile. However we cannot make a crisp judgement regarding edges and boundaries (even without noise) as $\rho_i \rho_j \equiv 0$ is not satisfied for $i \neq j$. More usefulness of smooth partitions of unity and other issues will be discussed in later sections.

Given any non-negative, integrable function $\rho$ defined on $\mathcal{D}$ let $\lambda_{\rho}$ denote the measure on $( \mathcal{D},\mathcal{B} )$
\begin{equation}\label{abscont}
\lambda_{\rho}(A) = \int_A \rho(\mt{s})\, \mbox{d}\mt{s} \quad \mbox{ for } A \in \mathcal{B},
\end{equation} 
where $\mathcal{B}$ is the Borel $\sigma$-field. As has already been observed we practically work on a finite sub-field of $\mathcal{B}$ generated by arbitrary unions of $S_i$'s, mostly rectangles of sub-images. If for another non-negative, integrable $g$, $\lambda_g$ is absolutely continuous with respect to $\lambda_{\rho}$ then it will be  abbreviated as $g \, {\scriptstyle \ll} \,\rho$ (see \cite{Billingsley}). For any density $f$ and a partition of unity $(\rho_1, \rho_2, \ldots, \rho_k)$ the density can be localized into $k$ pieces given by $\rho_i f$, $i=1,2,\ldots, k$. Notice that $\rho_i f {\scriptstyle \ll} \rho_i$. In case we can localize $f$ appropriately by making a right choice of the partition of unity functions so that the the neighborhoods dominating the $\rho_i$'s cover points of discontinuity of $f$ or $f^{\prime}$. The remaining component $\rho_0 f$ will extract the regular or smooth part of $f$ from the data. On the basis of this basic principle we choose the semi-parametric model for the unknown density $f$ by
\begin{equation}\label{model}
\mathcal{F} = \left\lbrace \sum_{i=1}^k \alpha_if_i(\mt{s}| \theta_i) + \alpha_0 g(\mt{s}): f_i(\cdot|\theta_i)\, {\scriptstyle \ll} \,\rho_i, g \,{\scriptstyle \ll} \,\rho_0, \mbox{ for some } (\rho_i), \sum_0^k \alpha_i =1  \right\rbrace.
\end{equation}

The model given by \eqref{model} is an extension of usual mixture model due to the non-parametric component $g$ and has an independent theoretical interest on its own. In this article \eqref{model} is adapted for certain specific application on edge-preservation and image restoration in mind. For comprehensive discussion on mixture models in classical statistics we refer to \cite{Lindsay}. Here for each $i$, $f_i(\mt{s}|\theta_i)$ denote family of locally defined parametrized densities which play a crucial role in fitting edges. In this paper, we develop a statistical methodology for edge detection and estimation by modeling the local features using a \textit{Local Template Model (LTM)} based on low-dimensional exponential family where the sufficient statistics allow different types of edges. The dimension is kept low with computational complexities in mind. The image domain $\mathcal{D}$ is scanned by fitting the LTM over pixel windows of fixed size (we reported the results with $11 \times 11$ windows) and significant locations were captured by testing presence of edge in multiple locations using the Holms procedure \cite{Holms,TSH}. The structure of \eqref{model} will be elaborated further in ensuing sections. The non-parametric component $g$ in \eqref{model} provides information about the smooth part or the background luminosity of the image. There are various smoothing techniques such as, kernel, local polynomials,  splines or other variational optimization principles \cite{Chambolle,Efrom4,Fan,Grace} available in the literature for estimating functions of uniform smoothness. The main challenge here is extraction of the component in presence of local irregularities generated by embedded objects. This is reflected in the complementary portion of \eqref{model} which is represented by a localized mixture model. In the present article, after eliminating local irregularities by maximum likelihood based optimization, we transform the problem to spectral domain (by virtue of bi-variate Fourier transform) and estimate the non-parametric component $g$. We minimize the squared $\ell_2$-distance between the empirical and theoretical Fourier coefficients  using Lagrangian relaxation by \textit{Thin Plate Spline (TPS)}. Spectral optimization turns out to be particularly simple and provides a closed form expression in the form of a kernel convolution with nonlinear bandwidth (that is, not a simple scaling) determined by the penalty parameters re-transformed using the inverse Fourier transform back to the spatial domain. In the one-dimensional case theoretical properties such as computation of mean integrated squared error (MISE), minimaxity and other optimality and implementation issues such as plug-in, cross-validation or thresholding properties  have been studied in detail. See for example, \cite{Efrom4,Hardle}. In our experiments with some of the benchmark images the LTM based edge preservation and TPS relaxation give fairly accurate outputs even under noisy condition without any Gibb's like phenomenon or increased bias. The plug-in method used here to estimate undetermined Lagrangian parameters  provides an alternative to Efromovich-Pinsker Block shrinkage method for regularizing Fourier coefficients developed in \cite{Efrom1,Efrom2}.

\subsection{Other methods: wavelets and Bayesian}
There are two other frequently used methodology for edge detection and edge preserving image restoration in noisy cases. They are wavelets and Bayesian image modeling. We discuss some salient features, commonalities and differences with the proposed method very briefly. Wavelet based recovery from noisy signals have been extensively studied for this problem with significant amount of success \cite{Anton,Donoho2,Lazzaro}. A comprehensive theoretical treatment of basic variational regularization of wavelets can be found in \cite{Chambolle}. For wavelet based methods optimal approximation depends critically on the space of functions being interpolated. Typically, smooth spaces involve Sobolev spaces and Besov spaces include more badly behaved functions (which are of interest here). Functions in these spaces can be well approximated by wavelets. However, the success of wavelet scheme critically depend on the filter banks and thresholding to attain consistency in noisy image without losing much in the interpolation problem without noise. Several thresholding schemes have been developed and sharpened over the years. Some of the basic issues are described in \cite{Donoho0}. We also demonstrate below in Figure 1, how the benchmark functions of \cite{Donoho0} can be constructed using the main model given by \eqref{model}. As far as similarity goes the wavelet functions due to their location shifts and scaling at different resolutions are similar to searching the image domain $\mathcal{D}$ by \textit{local} templates. Although orthogonality is a bonus in terms of computation it is not possible to locally deform wavelets using local parametric templates retaining their orthogonality. Several improvements and newer ideas have been added to achieve this flexibility in the form of Gabor filter banks, kernel deformation, curvelets, contourlets among others \cite{Candes,Do,Perona}. 

\begin{figure}[!h]
\centering
\includegraphics[scale=0.55]{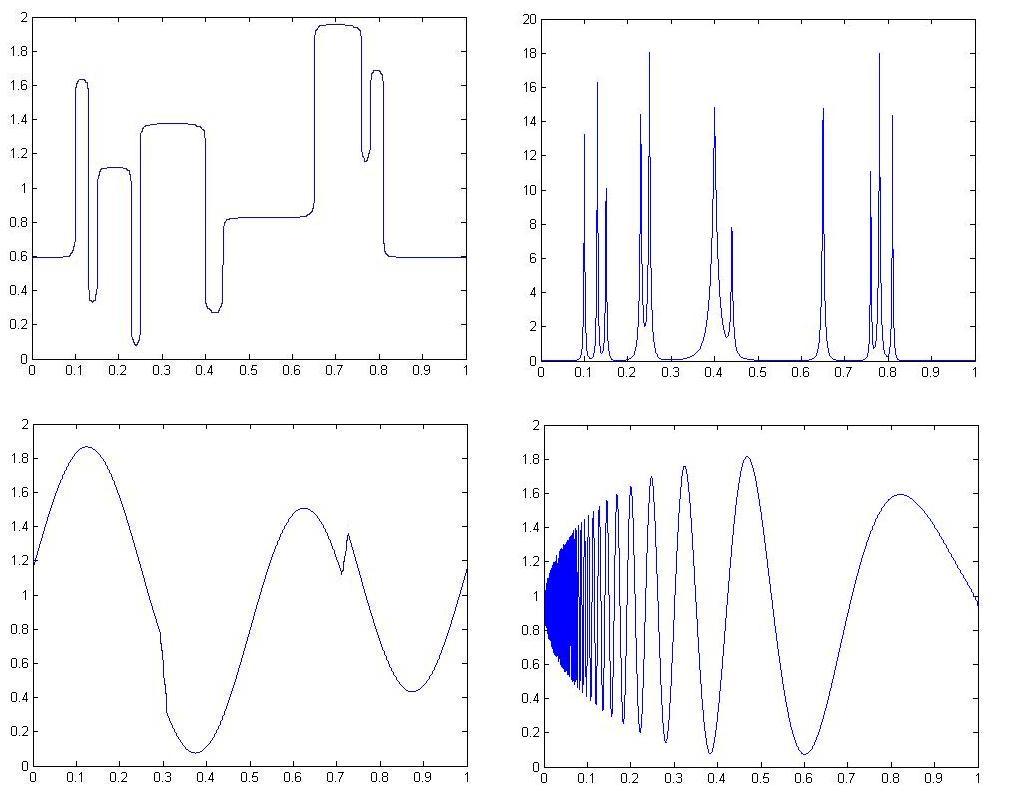}
\caption{Densities similar to the benchmark functions of \cite{Donoho0} obtained through the model \eqref{model}}
\end{figure}

In the Bayesian image processing literature basic attempt is to model the pixel probability vector $\Pi$ in \eqref{model1} using Markov random field priors \cite{Clifford,Geman0,Geman}. The key idea is to compute the posterior of the image, however a computational variant called Maximum a Posteriori Probability (MAP) estimate is commonly used. The method has generalized to variational Bayesian learning (ensemble learning) and Bayesian networks using extensive use of MCMC and other sampling methods \cite{Bernardo}. The construction of the main object of interest, that is, the prior for $\Pi$ is achieved by defining a suitable graphical structure of neighborhoods on the set $\mathcal{G}$ and then use the cliques of the graph to define the so called Gibbs distribution \cite{Clifford}. The prior constructed in this manner can be used as the relaxation for the maximum likelihood variational problem. This poses a challenging question: how to construct suitable Markov random field on $\mathcal{F}$ in \eqref{model}?     

In recent years, numerous smoothing procedures have been suggested in the statistical literature for edge-preserving function estimation from noisy data. General description can be found in the following references \cite{Fan,Hardle,Muller,Qiu2,Grace}.  Several 1D-methodology, have been discussed in literature. See references in \cite{Qiu} for details. A good overview of edge detection techniques has been discussed in \cite{Ziou}. There are many edge detection algorithms available in literature such as ISEF, Canny, Marr-Hildreth, Sobel, Kirsch, Lapla1 and Lapla2. A detailed and comparative study of these techniques can be found in \cite{Sharifi}. Reviews of the state of the art techniques for edge and line oriented approaches to contour detection can be seen in \cite{Papari}. Other edge preserving approaches in image processing have been discussed in literature \cite{Chu,Hillebrand} and references therin. 

The rest of the paper is organized as follows. In section 2 we provide theoretical details and some statistical issues related to our problem. Section 3 describes the methodology which we follow. Section 4 is dedicated for the application of our methodology on Lenna's image. In Section 5 we analyze the errors and conclude in Section 6 with some limitations, modifications and possible extensions. Few rigorous theoretical proofs and the detailed algorithm can be seen in the Appendix.
\section{Theory basics}\label{theory}
\subsection{Basic distribution theory}
Consider the multinomial model \eqref{model1} and the empirical histogram descriptor defined on $\mathcal{S} := \cup_{i=1}^D\, \mt{S}_i\,\subset \mathcal{D}$. The almost everywhere definition suffices as $\lambda(\mathcal{D} \backslash \mathcal{S}) = 0$ (see section \ref{intro}).   
\begin{equation}\label{estimator}
f^{\textit{emp}}_T (\mt{s}) = \frac{D}{T}\sum_{i=1}^D \, Y(\mt{s}_i)\, I(\mt{s} \in S_i),
\end{equation}    
where $I(\cdot)$ denotes the indicator function. The expected \textit{image density}, the main parameter of interest under the model, is given by  
\begin{equation}\label{mean:est}
\pi_f (\mt{s}) = D\sum_{i=1}^D \, \Pi_f(\mt{s}_i) \, I(\mt{s} \in S_i).
\end{equation} 
Note that $\pi_f$ is provides the information on average intensity of the image at the pixels and thus represents the true image after \textit{theoretical} noise removal. Under model assumptions (be it parametric or non-parametric) the theoretical expectation functional  $\mathbb{E}_{\pi_f}$ is assumed to be unknown. The central goal of any statistical procedure is to reconstruct this functional from pixel site data $Y$ and structural model assumptions involving unknown parameters and functions (in our case it is provided by \eqref{model}). The empirical evidence is summarized through $\femp$. If we estimate the target functional $\mathbb{E}_{\pi_f}$ simply by $\mathbb{E}_{\mfemp}$ we get an under-smoothed estimate missing out relevant substructures in noisy images and becoming extremely sensitive to minor degradation and noise, losing \textit{robustness} altogether.    

With the above scaling for any integrable function $\rho$ defined on $\mathcal{D}$, in view of \eqref{model1},\eqref{estimator} and \eqref{mean:est} the expectation projection functionals can be calculated as follows. 
\begin{equation}\label{proj_identity}
\begin{split}
& \mathbb{E}_{\femp}( \rho) = \frac{1}{T} \sum_{i=1}^D\, \bar{\rho}(\mt{s}_i) Y(\mt{s}_i)  \\ 
& \mathbb{E}_{\pi_f} (\rho) = \sum_{i=1}^D\, \bar{\rho}(\mt{s}_i) \Pi_f(\mt{s}_i),
\end{split}
\end{equation}
where $\bar{\rho}(\mt{s}_i) =D \int_{\mt{S}_i}\, \rho(\mt{s})\,\ds = \left(\int_{\mt{S}_i}\, \rho(\mt{s})\,\ds\right)/\left(\int_{\mt{S}_i}\! \ds\right)$ for $i=1,2, \ldots, D$. We interpret $\bar{\rho}$ as the image projection of original $\rho$. The quantities on the right hand side of \eqref{proj_identity} can be interpreted as discretized expectation functionals at the pixel sites $\mathcal{G}$
. The basic projection identity helps us understanding how the class of estimators defined by linear functionals of densities defined on $(\mathcal{D}, \mathcal{B})$ reduce to linear functions of sufficient statistics $\{Y(\mt{s}_i)\}$ under the multinomial model \eqref{model1}.

In this section we obtain some heuristic approximations which can be made into precise asymptotics if so desired. We also suppress the suffix $T$ wherever possible for notational simplicity (both $T$ and $D= M^2$ are given quantities). However we shall keep the fact in mind that both $D$ and $T$ are required to be large for greater precision in estimation. For any $m \times m$ square neighborhood $N$  define its \textit{spatial bandwidth} to be $h_m= (m/K)$. For a rectangular $m \times r$ neighborhood its spatial bandwidth will be $(h_m h_r)^{1/2}$. We implicitly assume the following in what follows. For any rectangular $m \times r$ neighborhood  
\begin{equation}\label{assumption}
a\, h_m h_r \leq \Pi_f(N) \leq A \, h_m h_r, 
\end{equation}
for suitable global constants $a, A > 0$.   

A function $\rho$ defined on $\mathcal{D}$ is \textit{image measurable} if it assumes constant values over pixels $\mt{S}_i$'s. For an image measurable $\rho$, it can be easily seen  $\rho = \bar{\rho}$. We list some useful properties of multinomial counts in the following proposition. 
\begin{prop}\label{moment} 
\begin{enumerate}\item[(i)] Let $\rho$ be any image measurable function defined on $\mathcal{D}$. Then 
\begin{equation}\label{prop1}
\begin{split}
(a) \,\mathbb{E}_f \mathbb{E}_{\femp} (\rho) &= \mathbb{E}_{\pi_f} (\rho) \triangleq \mathbb{E}_{\Pi_f} (\rho) \phantom{\mathop{\frac{1}{T} \left[ \mathbb{E}(\rho^2\,|\Pi_f) -  \mathbb{E}^2(\rho\,|\Pi_f) \right] }}\\
\vspace{3mm} 
(b) \mbox{Var}_f \left( \mathbb{E}_{\femp} (\rho)\right) &= \frac{1}{T}\,\mbox{Var}(\rho\,| \pi_f) := \frac{1}{T} \left[ \mathbb{E}_{\pi_f}(\rho^2) -  \mathbb{E}_{\pi_f}^2(\rho) \right] \\
\vspace{5mm}
 &\triangleq \frac{1}{T}\mbox{Var}(\rho| \Pi_f).
\end{split}  
\end{equation}
\item[(ii)] Let $N_i \subset \mathcal{D}$ be $m \times m$ pixel neighborhoods and $\rho_i$ be image measurable function vanishing outside $N_i$ for i=1,2. Then
\begin{equation} \begin{split}
\mbox{Cov}_f \,\left( \mathbb{E}_{\femp}(\rho_1),\mathbb{E}_{\femp}(\rho_2) \right) & 
= \frac{1}{T}\mbox{Cov}(\rho_1, \rho_2 | \pi_f) =\frac{1}{T} \left( \mathbb{E}_{\pi_f}(\rho_1 \rho_2)  - \mathbb{E}_{\pi_f}(\rho_1) \, \mathbb{E}_{\pi_f}(\rho_2) \right) \\
&\triangleq \frac{1}{T} \mbox{Cov}(\rho_1, \rho_2 \, | \Pi_f) \\
&= \frac{1}{T} \lbrace \Pi_f(N_1N_2) \,\mathbb{E}_{\Pi_f} (\rho_1 \rho_2 \cond I_{N_1}=I_{N_2}=1)  \\ 
\vspace{5mm}
 &\quad  - \Pi_f(N_1)\, \Pi_f(N_2) \, \mathbb{E}_{\Pi_f}(\rho_1 \cond I_{N_1}=1) \,\mathbb{E}_{\Pi_f}(\rho_2 \cond I_{N_2}=1) \rbrace, 
\end{split} \end{equation}
where $I_N$ is the indicator variable attached to the neighborhood $N$ and $\mathbb{E}(\cdot \cond \cdot)$ is the generic notation for conditional expectation of the first argument given the second. In the last line the conditional expectations are calculated under the discrete probability setup $(\mathcal{G}, \Pi_f)$ with the random variables $\rho_1, \rho_2: \mathcal{G} \rightarrow \mathbb{R}$ defined by their image measurability.
\end{enumerate}
\end{prop}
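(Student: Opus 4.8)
The plan is to reduce every assertion to the elementary first- and second-order moments of the multinomial vector $Y\sim\mbox{Multinomial}(T,\Pi_f)$ through the projection identity \eqref{proj_identity}. The first step is to observe that an image measurable $\rho$ satisfies $\rho=\bar\rho$, so that $\mathbb{E}_{\femp}(\rho)=T^{-1}\sum_{i=1}^{D}\rho(\mt{s}_i)\,Y(\mt{s}_i)$ is a fixed linear combination of the coordinates of $Y$ (and the same holds for $\rho^{2}$, which is again image measurable). Part (i)(a) is then immediate: applying $\mathbb{E}_f$ term by term and using $\mathbb{E}_f\,Y(\mt{s}_i)=T\,\Pi_f(\mt{s}_i)$ gives $\mathbb{E}_f\mathbb{E}_{\femp}(\rho)=\sum_{i=1}^{D}\rho(\mt{s}_i)\Pi_f(\mt{s}_i)$, which is exactly $\mathbb{E}_{\pi_f}(\rho)$ by the second line of \eqref{proj_identity}; the equality with $\mathbb{E}_{\Pi_f}(\rho)$ is merely the definition of the discrete expectation over $(\mathcal{G},\Pi_f)$.

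For (i)(b) I would expand $\mbox{Var}_f\!\left(\mathbb{E}_{\femp}(\rho)\right)=T^{-2}\sum_{i,j}\rho(\mt{s}_i)\rho(\mt{s}_j)\,\mbox{Cov}_f\!\left(Y(\mt{s}_i),Y(\mt{s}_j)\right)$ and insert the standard multinomial identities $\mbox{Var}_f\,Y(\mt{s}_i)=T\,\Pi_f(\mt{s}_i)\big(1-\Pi_f(\mt{s}_i)\big)$ and $\mbox{Cov}_f(Y(\mt{s}_i),Y(\mt{s}_j))=-T\,\Pi_f(\mt{s}_i)\Pi_f(\mt{s}_j)$ for $i\neq j$. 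The diagonal contribution is $T^{-1}\sum_i\rho(\mt{s}_i)^{2}\Pi_f(\mt{s}_i)=\mathbb{E}_{\pi_f}(\rho^{2})/T$, while its leftover $-T^{-1}\sum_i\rho(\mt{s}_i)^{2}\Pi_f(\mt{s}_i)^{2}$ merges with the off-diagonal sum into the single negative square $-T^{-1}\big(\sum_i\rho(\mt{s}_i)\Pi_f(\mt{s}_i)\big)^{2}=-\mathbb{E}_{\pi_f}^{2}(\rho)/T$; adding the two pieces gives the claimed $T^{-1}[\mathbb{E}_{\pi_f}(\rho^{2})-\mathbb{E}_{\pi_f}^{2}(\rho)]$. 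The only delicate point is this bookkeeping that folds the $i=j$ correction back into the double sum to form a clean square.

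Part (ii) proceeds identically: $\mbox{Cov}_f(\mathbb{E}_{\femp}(\rho_1),\mathbb{E}_{\femp}(\rho_2))=T^{-2}\sum_{i,j}\rho_1(\mt{s}_i)\rho_2(\mt{s}_j)\mbox{Cov}_f(Y(\mt{s}_i),Y(\mt{s}_j))$, and the same cancellation produces the first displayed form $T^{-1}(\mathbb{E}_{\pi_f}(\rho_1\rho_2)-\mathbb{E}_{\pi_f}(\rho_1)\mathbb{E}_{\pi_f}(\rho_2))$. To reach the last line I would invoke the law of total expectation on $(\mathcal{G},\Pi_f)$: since $\rho_1$ vanishes off $N_1$ and $\rho_2$ off $N_2$, the product $\rho_1\rho_2$ is supported on $N_1\cap N_2$, so $\mathbb{E}_{\Pi_f}(\rho_1\rho_2)=\Pi_f(N_1N_2)\,\mathbb{E}_{\Pi_f}(\rho_1\rho_2\cond I_{N_1}=I_{N_2}=1)$, and likewise $\mathbb{E}_{\Pi_f}(\rho_\ell)=\Pi_f(N_\ell)\,\mathbb{E}_{\Pi_f}(\rho_\ell\cond I_{N_\ell}=1)$ for $\ell=1,2$; substituting these three identities reproduces the stated decomposition.

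The argument is entirely elementary, so there is no genuine obstacle; the one spot that merits a comment is the degenerate case in which $N_1$ and $N_2$ are disjoint (or meet only on the grid $C$, a set of measure zero). There $\Pi_f(N_1N_2)=0$, the conditioning event has probability zero, and $\mathbb{E}_{\Pi_f}(\rho_1\rho_2\cond I_{N_1}=I_{N_2}=1)$ is not literally defined; one simply reads the product $\Pi_f(N_1N_2)\,\mathbb{E}_{\Pi_f}(\rho_1\rho_2\cond I_{N_1}=I_{N_2}=1)$ as $0$, consistent with $\rho_1\rho_2\equiv 0$ in that case, so the identity holds trivially.
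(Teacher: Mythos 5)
Your proof is correct. The paper itself states Proposition \ref{moment} without proof, treating it as a collection of standard multinomial moment facts built on the projection identity \eqref{proj_identity}, and your argument --- reducing $\mathbb{E}_{\femp}(\rho)$ to a fixed linear combination of the counts $Y(\mt{s}_i)$ via $\rho=\bar\rho$, then applying the first and second multinomial moments and the law of total expectation on $(\mathcal{G},\Pi_f)$ for the conditional form in (ii) --- is exactly the intended computation; your closing remark on the degenerate case $\Pi_f(N_1N_2)=0$ is a sensible clarification the paper does not bother to make.
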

 
\subsection{Discretization of MISE}
The mean integrated squared error (MISE) has become a benchmark risk function in density estimation over the decades in statistical literature \cite{Efrom4,Fan,Hardle,Silverman}. The special structure of MISE has its natural roots for the class of kernel smoothed density estimators and clarity of bias-variance trade-off. The ANOVA decomposition of squared error and its intrinsic connection with Euclidean geometry is another attraction. Due to squaring of the densities it gets less influenced by improbable values in comparison with $L_1$ or  Hellinger metrics.  It is known that the $L_1$ error is equivalent to the total variation distance which represents the worst case discrepancy in predicting events. However this makes the metric more suitable for other applications. Although various other error assessment of non-parametric procedures has become more popular in recent years particularly in areas of machine learning such as, unsupervised learning, classification, prediction, data mining, and support vector machines we choose the MISE as the risk function for the purpose of the present article \cite{Hastie,Vapnik}. 

We proceed with a useful discretization lemma below which gives us a significant reduction in the class of estimators under the multinomial model \eqref{model1}. In the language of statistical decision theory the lemma tells us while summarizing the data by the empirical histogram $\femp$, histogram type smooths form a \textit{complete class} under MISE \cite{Berger}. We require a slight variation of the usual conditional version of the Jensen's inequality for this purpose and later. In order to keep presentation neat we state it separately as a lemma. The Jensen's inequality in its conditional form is essentially the basis of \textit{Rao-Blackwellization} in statistics. For a proof see  \cite{Billingsley}. 
\begin{lem}\label{jensen} Suppose $g: L \times L \rightarrow \mathbb{R}$ is a continuous function defined on a suitable interval $ L \subseteq \mathbb{R}$, such that $g(\cdot, t)$ is convex for every $t \in L$.  Consider a random variable $U$ and a random object $V$ defined on the same probability space with with $U$ having finite second moment. Assume that $U$ takes values in $L$. Then
\begin{equation}\label{jensen_ineq}
\mathbb{E}\lbrace g (U, \tau(V) ) \cond V \rbrace \geq  g\!\left(\mathbb{E}(U \cond V),\tau(V) \right) \quad \mbox{ \textit{a.e.} }
\end{equation}
where $\tau$ is a $L$-valued function defined on the range of $V$. If $g(\cdot, t)$ is strictly convex for every $t \in L$, equality holds if and only if $U$ is a function of $V$.
\end{lem}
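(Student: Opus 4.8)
The plan is to reduce \eqref{jensen_ineq} to the supporting-line characterization of convexity applied fibrewise in the second argument, using that both $\tau(V)$ and $\mathbb{E}(U\cond V)$ are $\sigma(V)$-measurable and hence behave as constants under $\mathbb{E}(\,\cdot\cond V)$. Write $W=\mathbb{E}(U\cond V)$ and $T=\tau(V)$; since $g$ is continuous on $L\times L$, $g(W,T)$ is a finite random variable. For each fixed $t$ the convex function $g(\cdot,t)$ has a supporting line at every interior point $w$, say with slope $m(w,t)$ (e.g.\ the right-hand derivative $\partial_+ g(w,t)$), so $g(u,t)\ge g(w,t)+m(w,t)(u-w)$ for all $u\in L$. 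I would substitute the random arguments $u=U$, $w=W$, $t=T$ to get a pointwise inequality, then apply $\mathbb{E}(\,\cdot\cond V)$ and pull the $\sigma(V)$-measurable factors $g(W,T)$ and $m(W,T)$ through the conditional expectation, so that the right side collapses to $g(W,T)+m(W,T)\,\mathbb{E}(U-W\cond V)=g(W,T)$ because $\mathbb{E}(U-W\cond V)=0$. That is precisely \eqref{jensen_ineq}. A slicker route, if the underlying space is standard Borel, is to take a regular conditional distribution $\mu_V$ of $U$ given $V$ and apply ordinary Jensen to the probability measure $\mu_{V(\omega)}$ and the convex function $g(\cdot,\tau(V(\omega)))$ for a.e.\ $\omega$.

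Two points will need care, and the second is the real obstacle. First, measurability: one must check that $m(w,t)=\partial_+ g(w,t)$ is jointly Borel in $(w,t)$ — it is a pointwise limit of difference quotients of the continuous function $g$ — so that $m(W,T)$ is $\sigma(V)$-measurable; if $L$ is closed, the event that $W$ hits an endpoint of $L$ (where the relevant one-sided derivative may be infinite) is treated separately or absorbed into the ``a.e.'' qualifier. Second, integrability: the cross term $m(W,T)(U-W)$ need not be integrable even when $g(U,T)$ is, so pulling it out of $\mathbb{E}(\,\cdot\cond V)$ is not automatic. The fix I would use is to represent $g(\cdot,t)=\sup_{j\ge1}\bigl(a_j(t)\,\cdot\,+b_j(t)\bigr)$ as a countable supremum of affine minorants (tangent lines at rational interior points) with Borel coefficients $a_j,b_j$, apply the elementary bound $\mathbb{E}\bigl(a_j(T)U+b_j(T)\cond V\bigr)=a_j(T)W+b_j(T)\le\mathbb{E}(g(U,T)\cond V)$ for each fixed $j$ (here only $\mathbb{E}(U\cond V)$ enters, and integrability of $a_j(T)U$ is handled by the finite-second-moment assumption on $U$ together with a truncation argument), and then take the supremum over $j$ on the left to recover $g(W,T)$.

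For the equality case I would argue as follows. Assuming $g(\cdot,t)$ strictly convex for every $t$, equality in \eqref{jensen_ineq} forces the pointwise supporting-line inequality $g(U,T)\ge g(W,T)+m(W,T)(U-W)$ to be an equality almost everywhere; but a strictly convex function meets each of its supporting lines only at the point of tangency, so this is possible only if $U=W=\mathbb{E}(U\cond V)$ a.e., i.e.\ $U$ coincides a.e.\ with a $\sigma(V)$-measurable random variable and hence is a version of a function of $V$. The converse is immediate: if $U=h(V)$, then $\mathbb{E}(U\cond V)=U$ and both sides of \eqref{jensen_ineq} equal $g(U,\tau(V))$.
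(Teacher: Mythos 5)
Your argument is correct, and it coincides with what the paper relies on: the paper does not prove Lemma \ref{jensen} itself but defers to the conditional Jensen inequality in \cite{Billingsley}, whose standard proof is exactly your representation of the convex section $g(\cdot,t)$ as a countable supremum of affine minorants with measurable coefficients, followed by pulling the $\sigma(V)$-measurable quantities through the conditional expectation. Your handling of the two genuine subtleties --- joint measurability of the one-sided derivative and the non-integrability of the cross term $m(W,T)(U-W)$ --- and your treatment of the strict-convexity equality case are all sound and match that reference's approach.
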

The \textit{random object} part may be made technically more precise if we recognize it as a sub $\sigma$-field and $\tau$ being measurable with respect to that  sub $\sigma$-field. However, we discuss only the heuristics here.
  
Consider an arbitrary estimator $\hat{f}_T(\mt{s};Y)$ of the image density $\pi_f(\mt{s})$. Under the model \eqref{model} $\hat{f}_T(\cdot; Y)$ takes values in $\mathcal{F}$. For the particular estimator the MISE risk function is defined by
\begin{equation}\label{mise_def}
\begin{split}
R(\hat{f}_T, f) &= \mathbb{E}_{f} \int_{\mathcal{D}}\, \left( \hat{f}_T(\mt{s}; Y) - \mathbb{E}_f \femp(\mt{s}) \right)^2\ds \\
& = \mathbb{E}_{f} \int_{\mathcal{D}}\, \left( \hat{f}_T(\mt{s}; Y) - \pi_f(\mt{s}) \right)^2\ds \\
& = \sum_{i=1}^D\, \mathbb{E}_f \int_{\mt{S}_i}\,\left( \hat{f}_T(\mt{s}; Y) - D\Pi_f(\mt{s}_i) \right)^2\ds.
\end{split}
\end{equation}   
Next conceive of a extraneous randomization $(I, X)$ which is realized by choosing a random index $1\leq I \leq D$ and then a random point $X$ from the uniform distribution on $\mt{S}_I$. The data vector $Y$ and the other random object. All these random objects can be defined on a suitable product probability space with expectation denoted by $\mathbb{E}_f^*$. We apply Lemma \eqref{jensen_ineq} with $U=X$, $V=(Y, I)$ and $g(x) = x^2$, to obtain the following lower bound.  
\begin{equation}\label{mise_RB}
\begin{split}
R(\hat{f}_T, f) & = D^2 \mathbb{E}_f^* \left( \hat{f}_T(X; Y) - D\Pi_f(\mt{s}_I) \right)^2 \\
& \geq D^2 \mathbb{E}_f^*\, \left( \mathbb{E}^*\lbrace \hat{f}_T(X; Y) \cond V \rbrace - D\Pi_f(\mt{s}_I) \right)^2.  
\end{split}
\end{equation}   
The subscript $f$ is omitted in the inner expectation as it is conditional on the sufficient statistic $Y$, hence expectation is carried out only with respect to external randomization. It can be easily checked that for any $i$,  $\mathbb{E}_f^*\lbrace \hat{f}_T (X; Y) \cond I=i, Y\rbrace = \left( \int_{\mt{S}_i}\,\hat{f}_T(\mt{s}) \ds \right) / \left( \int_{\mt{S}_i} \ds \right) $ $= D \int_{\mt{S}_i}\,\hat{f}_T(\mt{s}) \ds$. 
In other words the Rao-Blackwellized estimator is constant on the elementary pixels $S_i$'s. Therefore we conclude for the estimation of $\pi_f(\mt{s}) = \mathbb{E}_f \femp(\mt{s})$ the class of \textit{image measurable histogram estimators} given by 
\begin{equation}\label{hist_est}
\mathcal{H} = \left\lbrace \hat{f}_T(\mt{s}) = \sum_{i=1}^D\, \hat{\pi}_i(Y) \, I(\mt{s} \in \mt{S}_i) : \hat{\pi}_i(Y) \geq 0, \mbox{ for all } i,  \sum_{i=1}^D\, \hat{\pi}_i(Y) = D, \mbox{\textit{a.e.}} \right\rbrace
\end{equation}
forms a complete class under the MISE given by \eqref{mise_def} (\textit{viz.} \cite{Berger}). Under the basic semi-parametric model \eqref{model} the edge-preserving reconstructions $\hat{f}_T\in \mathcal{F}$. Unfortunately the LTM based model does not contain histogram like functions. However histograms form a dense subset in a larger space of densities (\textit{viz.} \cite{Brown,Donoho0,Nussbaum}). In order to rectify this problem we modify the continuous version of MISE and adapt it to the current context. Given a discretization (digitization) $\lbrace \mt{S}_i, i \in \mathcal{G}\rbrace$ of a continuous (analogue) problem The discretized mean square error (DMSE) for an estimator $\hat{f}_T$ under the data generating density $f$ is defined as
\begin{equation}\label{dmse_def}
\begin{split}
R_{D}( \hat{f}_T, f) & \triangleq \frac{1}{h_1} \,\mathbb{E}_f\sum_{i=1}^D\, \left( \int_{\mt{S}_i}\, [\hat{f}_T(\mt{s}) - f(\mt{s})] \ds \right)^{\!2}.\\
                             & = \frac{1}{h_1} \,\mathbb{E}_f\sum_{i=1}^D\, [ \hat{\Pi}_T(\mt{s}_i) - \Pi_f(\mt{s}_i) ]^2,   
\end{split}
\end{equation}    
where $\hat{\Pi}_T(\mt{s}_i) = \int_{\mt{S}_i}\, \hat{f}_T(\mt{s}) \ds$. The validity of the approximation in \eqref{dmse_def} can be naturally justified as a Riemann sum approximation of the MISE. Also, by virtue of the Rao-Blackwellization step in \eqref{mise_RB} DMSE provides a lower bound of MISE for any given partition $\lbrace \mt{S}_i, i \in \mathcal{G}\rbrace$. This is a natural adaptation of classical MISE under discretization and is defined for continuous estimates and data generating densities. It may also be seen that the class of histogram estimators defined by \eqref{hist_est} remains  \textit{essentially complete} or \textit{risk equivalent} (\textit{viz.} \cite{Berger}) with the class of general smooth estimators on the continuous (analogue) domain. The optimal rate for bivariate density estimation with kernel smoothing can be found in \cite{Wand}. However, it should be noted that the underlying assumptions are smooth density models. The optimum rate achievable under wavelets for vrious function classes are described by \cite{Donoho0,Donoho}. However, in present situation without assuming any further smoothness assumptions on the set of edges it is virtually impossible to conclude about rate. Some theoretical approach towards optimal rates will be taken up elsewhere.  

\subsection{Variational optimization of the semi-parametric likelihood with TPS Lagrangian relaxation}
After a suitable reparametrization  $ f \in \mathcal{F}$ can be represented as $f(\mt{s}) = \rho_0(\mt{s}) \,g_0(\mt{s}) + \sum_{i=1}^k \rho_i(\mt{s})\, g_i(\mt{s}|\theta_i)$ with respect to the associated partition of unity $(\rho_i)$ where $g_i(\mt{s}|\theta_i)$'s are defined by
\begin{equation}\label{LTM1}
\int_A \,g_i(\mt{s}|\theta_i)\,\rho_i(\mt{s}) \ds = \alpha_i \int_A \,f_i(\mt{s}|\theta_i) \ds, 
\end{equation} 
for measurable sets $A \in \mathcal{B}$ and $i=0, 1, \ldots, k$. Because we are dealing with probability distributions we also get  
$$ \int_{\mathcal{D}}\, \rho_0(\mt{s})\, g_0(\mt{s}) \ds + \sum_{i=1}^k\, \int_{\mathcal{D}}\, \rho_i(\mt{s})\, g_i(\mt{s}|\theta_i) \ds = 1.$$ 
This is obtained by invoking the absolute continuity conditions in \eqref{model}. This requires an application of Radon-Nikodym theorem \cite{Billingsley}. The variational optimization problem we propose does not directly optimize the DMSE. In stead we consider the following upper bound to arrive at a more intuitive and computable criterion using separation of variables. The main unstructured part of the procedure is determination of the partitions of unity $(\rho_i, N_i)$, for $ i = 1, 2, \dotsi, k$ where k is also assumed to be unknown unknown. This step is achieved by scanning the image with a fixed rectangular neighborhood. At each placement a decision problem is posed whether the the local parametric template with discontinuity should be fitted or not. The decision making is not done independently of the placement, rather a multiple testing formulation has been considered to detect the significant neighborhood whose union covers the unknown subset of discontinuities. Once $\rho_i$'s are selected the remaining parameter (conditioned on the choice of partitions of unity) are $\boldsymbol{\theta}=(\theta_1,\theta_2,\dotsi,\theta_k)$ and a smooth density estimator $g_0(\mt{s})$ of $\rho_0(\mt{s}) \femp (\mt{s})$. We give an intuitve justification for separation of variables using MISE. Although the results are also true for DMSE we omit that for notational inconvenience.    
\begin{align}\label{mise}
R(\hat{f}_T, f) &= \mathbb{E}_f \int_{\mathcal{D}}\, (\hat{f}_T(\mt{s})- f(\mt{s}))^2 \ds \nonumber \\
                &= \mathbb{E}_f \int_{\mathcal{D}}\, \left(\hat{f}_T(\mt{s})- \sum_{i=1}^k \,\alpha_if_i(\mt{s}| \theta_i) - \alpha_0 f_0(\mt{s})\right)^2 \ds  \nonumber \\
                &= \mathbb{E}_f \int_{\mathcal{D}}\, \left( \rho_0(\mt{s})\, (\hat{f}_T(\mt{s})-g_0(\mt{s}))+ \sum_{i=1}^k \,\rho_i(\mt{s})\, (\hat{f}_T(\mt{s})-g_i(\mt{s}|\theta_i)) \right)^2 \ds \nonumber \\
                & \leq \mathbb{E}_f \int_{\mathcal{D}}\, \rho_0(\mt{s})\, \left(\hat{f}_T(\mt{s})-g_0(\mt{s}) \right)^2 \ds \,+\, \sum_{i=1}^k \,\mathbb{E}_f \int_{\mathcal{D}}\, \rho_i(\mt{s})\, \left(\hat{f}_T(\mt{s})-g_i(\mt{s}| \theta_i) \right)^2 \ds.
\end{align}

The last inequality in \eqref{mise} follows from Jensen's inequality for each $\mt{s}$. This inequality gives us an useful upper bound for the original MISE. The main advantage is that the original mixture problem is split into $k$ local approximation problems and a global smoothness problem (separation of variables). It also gives us a preference over the choice of $(\rho_i)$ (and the dominating neighborhoods). Note the the upper bound would be exact if $\rho_i$'s satisfied the point-wise orthogonality condition $\rho_k\,\rho_l \equiv 0$ for $k \neq l$. This is not achievable due to smoothness of the partitions. The inequality \eqref{mise} suggests that the covering should be as tight as possible to make the bound close to the exact MISE. Note that for $ 1 \leq i \geq k$ the variational problem is parametric hence the natural criterion is log-likelihood. Therefore we replace the MISE part by the likelihood criterion in \eqref{mise} to obtain the following minimization problem in $(\boldsymbol{\theta}, g_0)$. 
\begin{equation}\label{optimization}
\begin{split}
 Q(\boldsymbol{\theta},g_0|\femp) & \triangleq - \sum_{i=1}^k\, \int_{\mathcal{D}}\, \rho_i(\mt{s})\,\log g_i(\mt{s}|\theta_i) \femp(\mt{s}) \ds  \\
& \quad + \int_{\mathcal{D}}\, (\rho_0(\mt{s}) \femp(\mt{s})- g_0(\mt{s}))^2 \,\ds.
\end{split}
\end{equation}  

Note that as if expected with unconstrained variational problems, without any regularization the problem has a trivial solution, namely, $\rho_0 \equiv 1$, $\rho_i \equiv 0$, for $i \geq 1$ and $\rho_0(\mt{s})\,\femp(\mt{s})- g_0(\mt{s})\equiv 0$. Therefore a regularization is required. For the present approach we choose the TPS regularization assuming $g_0$ to be twice continuously differentiable. Let $D_2$ denote the second derivative matrix (the Hessian) of a function of two variables. The TPS regularization penalizes for $||D_2f||^2= \int_{\mathcal{D}}\, \mathop{||} D_2 f (\mt{s})\mathop{||}_F^2  \ds$, where the norm inside the integration denotes the sum of squares of the entries of the Hessian (squared Frobenious norm). After adding a suitable Lagrangian penalty parameter the final variational optimization problem with TPS regularization becomes
\begin{equation}\label{main_opt}
(\hat{\boldsymbol{\theta}}, \hat{g}_0) = \mathop{\arg\min}
_{f \in \mathcal{F},\, g_0 \,{\scriptscriptstyle <\!\!<}\, \rho_0} \left\{ Q(\boldsymbol{\theta}, g_0 | \femp) \, + \, \lambda || D_2  g_0||^2 \right\}.
\end{equation}
For the parametric part the optimization the discretization can be done using Theorem \ref{jensen}. This reduces the value of the objective function. Since $(-\log x\,)$ is a convex function it can be readily verified that for each $i$,
$$ -\int_{\mathcal{D}}\, \rho_i(\mt{s})\,\log g_i(\mt{s}|\theta_i) \femp(\mt{s}) \ds \geq -\frac{1}{T}\,\sum_{i=1}^D\, \log \bar{g}(\mt{s_i})| \theta_i) \, Y(\mt{s}_i),
$$
where $\bar{g}(\mt{s_i}) = (\int_{\mt{S}_i}\, \rho_i(\mt{s}) \ds ) / ( \int_{S_i}\, \rho_i(\mt{s}) \, \ds)$ whenever the denominator is positive. 

One last remark about the absolute continuity of the nonparametric component $g_0$. We have found that the solution in this form is more stable as the solution for the unweighted spline $\hat{g}_0$ automatically satisfies $\hat{g}_0/ \rho_0$ bounded. If there is any indication of possible singularity, a convolution of the Fourier coefficients of $\hat{g}_0$ with Fourier coefficients of $\rho_0$ serves the purpose. However the near singularity ( to be decided some tolerance level) might indicate something more serious, namely, certain irregular local neighborhoods of the image not being captured due to type II error of multiple testing procedure.  

\subsection{Embedding in spectral domain}
In order to describe the image embedding to the spectral domain first we describe the embedding of one dimensional signals defined on a compact intervals to the spectral domain in some detail. For signals defined on two dimensional domains we take the tensor product of two one dimensional spectra. In this subsection we briefly develop the notations and mention some of the useful properties of spectral embedding of signals which will be helpful in understanding the methodology better. Denote the set of complex numbers by $\mathbb{C}$ and let $\mathbb{T} \subset \mathbb{C}$ be the unit circle on the complex plane, $\mathbb{T}=\{ z \in \mathbb{C} : |z|=1\}$. This is a compact abelian group under the complex multiplication and the complex conjugacy satisfying $\bar{z} = z^{-1}$ on $\mathbb{T}$. The unit circle is the basic construct for spectral analysis of signals. The natural map of the unit interval $[0,1]$ into $\mathbb{T}$ is the exponential map $ x \mapsto z= \exp (2\pi j x) \triangleq \cos (2 \pi x) + j \sin (2 \pi x)$, with $j^2 =-1$, defined by the Euler's formula. The Lebesgue measure on $[0,1]$ and the Haar measure on $\mathbb{T}$ are related by $2 \pi\, dx = dz$ (\textit{viz.} \cite{Rudin}) and moreover, for any integrable function $f:\mathbb{T} \rightarrow \mathbb{C}$ the problem of integration can be transformed to the real line by the following formula.
$$\int_{\mathbb{T}}\, f(z)\, dz  = 2 \pi \,\int_0^1\, f( e^{2\pi j x}) \, dx. 
$$
Note that $g(x) = f( e^{2\pi j x})$ extends as a periodic function on the real line with period $2\pi$. The continuous homomorphisms of $\mathbb{T}$ into itself (the \textit{characters}) are given by,
\begin{equation}\label{characters}
\hat{\mathbb{T}} = \lbrace \phi_n(z) \triangleq z \mapsto z^n: n \in \mathbb{Z} \rbrace,
\end{equation}
where $\mathbb{Z}$ denotes the set of integers and $z^0 \equiv 1$. The sequence of functions $\lbrace\phi_n(z)\rbrace$ are mutually orthogonal with respect to $dz$ and $\lbrace (2\pi)^{-1/2} \phi_n(z) : n \in \mathbb{Z} \rbrace$ forms a complete orthonormal basis of $L_2(\mathbb{T}, dz)$. The Fourier coefficients of a function $f \in L_2(\mathbb{T}, dz)$ are given by 
\begin{equation}\label{fourier_def}
\hat{f}_n = \frac{1}{2\pi}\, \int_{\mathbb{T}}\, z^{-n} \, f(z) \, dz, \quad \mbox{ for } n\in \mathbb{Z}.
\end{equation}
The fundamental isometry property of the transform yields the following Fourier inversion formula and preservation of inner products and distances in $L_2(\mathbb{T}, dz)$ (Plancherel theorem, Perseval's identity, \textit{viz.}  \cite{Rudin}). For $f, g \in L_2(\mathbb{T}, dz)$
\begin{equation}\label{fourier_inv}
\begin{split}
f(z) &= \sum_{n=-\infty}^{\infty}\, \hat{f}_n\, z^n \\
\frac{1}{2 \pi}\,\int_{\mathbb{T}}\,\, f(z)\, \overline{\mathop{g(z)}}\,\, dz &=\sum_{n=-\infty}^{\infty}\, \hat{f}_n \,\overline{\hat{g}}_n \\
\frac{1}{2 \pi}\,\int_{\mathbb{T}}\,\, |f(z) - g(z)|^{2}\, dz &= \sum_{n=-\infty}^{\infty}\, |\hat{f}_n - \hat{g}_n|^{2}  
\end{split}
\end{equation}
For differentiable functions $f:\mathbb{T} \rightarrow \mathbb{C}$ we shall denote the derivative \textit{along the circle} at a point $z=z_0$ either by $f_{\!z} (z_0)$ or $\mathop{\frac{\partial }{\partial z}f (z_0)}$ to distinguish it from the derivative in the complex plane $d/dz$. The definition being  as 
$$   f_{\!z}(z_0) \triangleq \mathop{\frac{\partial f}{\partial z}} (z_0) = 2 \pi j \, \lim_{ h \curvearrowright 1} \frac{ f(hz_0) - f(z_0)}{ h - 1},
$$
where $\curvearrowright$ indicates convergence along $\mathbb{T}$. If $f_{\!z} \in L_2(\mathbb{T}, dz)$ the Fourier coefficients satisfy
\begin{equation}\label{fourier_deriv}
\hat{f}_{\!z, n} = n \hat{f}_n, \quad \mbox{ for } n \in \mathbb{Z}.
\end{equation}
From previous discussions we have already observed that Fourier coefficient contains the full information present in the signals which can be recovered by the inverse Fourier transform described by \eqref{fourier_inv}. We shall state one more important property of Fourier transforms (which is more crucial for density estimation problems) before we finish this discussion. If we have non-negative signals such as histograms, $\femp$,  $\pi_f$ or $\hat{f}_T$ discussed  section 2.1 one has one very useful property of Fourier transforms, that is, Bochner-Herglotz theorem, which states that the Fourier coefficients $\lbrace \hat{f}_n: n \in \mathbb{Z}\rbrace$ is a non-negative definite sequence. Moreover the converse is also true if $\hat{f}_0 =1$. Therefore \textit{spectral embedding} of a one dimensional spatial signal $f \in L_2([0,1], dx)$ is given by the sequence $\hat{f} =\left(\hat{\tilde{f}}_n: n \in \mathbb{Z}\right) \in \ell_2(\mathbb{Z}) $ where $\tilde{f} \in L_2(\mathbb{T}, dz)$ is defined by $\tilde{f}(z) = f(x)$ for $z = \exp (2 \pi j x)$. There is some ambiguity in the definition at $z=1$, however that is a set of measure zero and any choice can be made.

Next we describe how to discretize of $\mathbb{T}$ using the orientation of the unit circle which can be implemented in a fairly simple manner. Suppose we want to subdivide the unit circle into $K$ arcs of equal length and put a pointer to the centers of the arcs as well. For that we consider the set of $2K$-th roots of unity, that is, $\mbt{\Omega} = \lbrace z: z^{2K}=1\rbrace$. Let $\omega_{2K} = \exp (\pi j/K)$ be the $2K$-th root of unity which is nearest to $1$ in terms of arc-length in the counter-clock wise direction. Then $\mbt{\Omega}$ can be enumerated as $\lbrace \omega_{2K}^m: m=0,1,\dotsi, (2K-1) \rbrace$. In this enumeration the subset given by $\mbt{\Omega}_{odd} = \lbrace \omega_{2K}^m: m \mbox{ odd }\rbrace$, consisting of $K$ elements point to the centers of the $K$ arc intervals $\mathcal{T}= \lbrace ( \omega_{2K}^{2m}, \omega_{2K}^{2m+2}): m=0, 1, \dotsi, (K-1) \rbrace$. Note that for the last interval the end point is $ \omega_{2K}^{2K}=1$. It is also interesting to note that the end points corresponding to the even powers of $\omega_{2K}$ are also $K$-th roots of unity. This is an important observation because this property is used non-trivially in computing the recursions for the fast Fourier transform (FFT) algorithm for the discrete Fourier transform (DFT).
  
Finally, the embedding of the spatial domain $\mathcal{D}$ of the image begin with the torus, $\mathbb{T}^2 = \mathbb{T} \times \mathbb{T}$. The homomorphisms of $\mathbb{T}^2$ are given by $(z,w) \mapsto (z^m, w^n)$ for $(m, n) \in \mathbb{Z} \times \mathbb{Z}$. All the properties of Fourier transform on $\mathbb{T}$ carries through for the bivariate case. Following the convention in section \ref{intro}, a generic point on $\mathbb{T}^2$ will be denoted by $\mbt{\omega} = (z, w)$. Once the spatial domain is embedded to the torus the pixel centers $\mt{s}_i$ is mapped to a point on the torus of the form $\mbt{\omega}_i = (\omega_{2M}^{2k + 1}, \omega_{2M}^{2l+1})$ for some $0 \leq k, l \leq (M-1)$. The pixel rectangles $\mt{S}_i$'s are smoothly mapped onto $\mathbb{T}^2$-rectangles of the form $\mbt{\Omega}_i = U \times V$ where $U, V \in \mathcal{T}$ for each $i=1,2,\dotsi, D$. The group structure in $\mathbb{T}^2$ is specified by $\mbt{\omega}_1\mbt{\omega}_2 = (z_1z_2,w_1w_2)$   
for $\mbt{\omega}_1, \mbt{\omega}_2 \in \mathbb{T}^2$. The conjugacy operation is $\overline{\mbt{\omega}}_1 = (\overline{z}_1,\overline{w}_1)$, and the inverse is $\mbt{\omega}_1^{-1} = \overline{\mbt{\omega}}_1$. The natural metric in this group is defined by
\begin{equation}
\|\mbt{\omega}_1\overline{\mbt{\omega}}_2\| \triangleq \sqrt{|z_1\overline{z}_2 -1|^2 + |w_1\overline{w}_2 -1|^2}.
\end{equation}
Finally let open balls centred at point $\mbt{\omega}$ with radius $r$ be denoted by $B_r({\mbt{\omega}})$.

\section{Methodology}
In this section we describe the detailed methodology of finding the edge-preserving smooth density estimate from the class $\mathcal{F}$ defined in \eqref{model}. We begin by describing the \textit{Local Template Model} (LTM), for testing the presence of an edge in the local region of interest. We shift the LTM over the entire image, keeping track of the $p$-values for the tests. Multiple testing is performed to determine the regions $N_i$ having the edges at level $\alpha$. After obtaining the $N_i$'s, a linear programming problem is solved to obtain, the optimal partition of unity. The local regions are estimated through the LTM, while the density estimate for the remaining smooth region is then obtained through the TPS regularization using the Fourier basis. 

\subsection{Local Template Model (LTM)}
Since we have explained the Fourier technique in the spectral domain in section 2.4, for sake of uniformity in notation, we explain this parametric model in the spectral domain as well. As mentioned in the previous sections, this model is used for both edge detection and as well as local density estimation. Let $\tilde{N}(\mbt{\omega}_0)$ denote the an open neighborhood of size $m\times m$ pixels $(m = 2t + 1)$ in $\mathbb{T}^2$ having center at $\mbt{\omega}_0$ . $\tilde{N}(\mbt{\omega}_0)$ can be thought of as the image of an analogous planar neighborhood $N(\mt{s}_0)$ which can explicitly written as
\[
N(\mt{s}_0) = \left(\frac{i_0 - t}{M}, \frac{i_0 + t + 1}{M}\right) \times \left(\frac{j_0 - t}{M}, \frac{j_0 + t + 1}{M}\right)
\]
where $\mt{s}_0 = \left(\frac{i_0 + 1/2}{M},\frac{j_0 + 1/2}{M}\right)$. Note that we can find $r_1,r_2 > 0$ such that $B_{r_1}(\mbt{\omega}_0) \subset \tilde{N}(\mbt{\omega}_0) \subset B_{r_2}(\mbt{\omega}_0)$ for the continuous domain. 

Let $\rho(\cdot)$ denote a smooth non-negative function with support($\rho) \subset \tilde{N}(\boldsymbol{\mathtt{\omega}}_0)$, and 
$\mathbb{E}(\rho) > 0$ with respect to the Haar measure ($\domg = \mbox{d}z \times \mbox{d}w$) on $\mathbb{T}^2$. 
Since the Lebesgue measure does not change by the spectral transformation, we can define a measure following \eqref{abscont} on $(\mathbb{T}^2,\mathcal{B}_{\mathbb{T}^2})$, (where $\mathcal{B}_{\mathbb{T}^2}$ is the Borel $\sigma$-field on $\mathbb{T}^2$) as
\begin{equation}
\tilde{\lambda}_{\rho}(B) = \int_B \rho(\boldsymbol{\mathtt{\omega}})\, \mbox{d}\boldsymbol{\mathtt{\omega}} \quad \mbox{ for } B \in \mathcal{B}_{\mathbb{T}^2}.
\end{equation}
Note that $\tilde{\lambda}_{\rho}$ acts as a local smoothing measure at $\boldsymbol{\mathtt{\omega}}_0$ which is absolutely continuous with respect to $\lambda$, the Lebesgue measure on $\mathbb{T}^2$. 
Thus with respect to $\tilde{\lambda}_\rho$ we can define a model $p(\boldsymbol{\mathtt{\omega}}|\boldsymbol{\beta},\eta)$ on $\mathbb{T}^2$ by,
\begin{equation}\label{main_model}
p(\boldsymbol{\mathtt{\omega}}|\boldsymbol{\beta},\eta) \propto \exp\left(\boldsymbol{\beta}'T(\boldsymbol{\mathtt{\omega}}) + \eta|\boldsymbol{\beta}'T(\boldsymbol{\mathtt{\omega}})|\right)
\end{equation}
where $T$ is a function defined with range $(-\pi,\pi)$, given by $T(\mbt{\omega}) \triangleq (\tau(z), \tau(w))$ where $\tau(z) = x - 2\pi I( x > \pi),$ if $z= \exp( 2\pi j x)$ for $x \in [0,1]$. Note that the definition is ambiguous at $z = -1$ or $w=-1$, that is, $(\lbrace-1\rbrace \times \mathbb{T}) \cup (\mathbb{T}\times \lbrace-1\rbrace)$, which is a set of measure zero. Furthermore we have, 
\begin{equation}
\int_{\mathbb{T}^2} p(\boldsymbol{\mathtt{\omega}}|\boldsymbol{\beta},\eta) \mbox{d}\boldsymbol{\mathtt{\omega}} = 1 \;\;\text{for all} \;\;\boldsymbol{\beta} \in \mathbb{R}^2, \eta \in \mathbb{R}
\end{equation}
We use this  to construct local template supported on $\tilde{N}(\mbt{\omega}_0)$. These models \eqref{main_model} are \textit{scalable} and in a local neighborhood $\tilde{N}(\mbt{\omega}_0)$ it can be scaled using the \textit{support function} $\rho$ in the following manner.
\begin{equation}\label{ltm}
p(\boldsymbol{\mathtt{\omega}}|\boldsymbol{\beta},\eta) = \exp\lbrace \boldsymbol{\beta}'T(\boldsymbol{\mathtt{\omega}}\boldsymbol{\mathtt{\omega}}_0^{-1}) + \eta\left|\boldsymbol{\beta}'T(\boldsymbol{\mathtt{\omega}}\boldsymbol{\mathtt{\omega}}_0^{-1})\right| - d(\boldsymbol{\beta},\eta)\rbrace \times\rho(\boldsymbol{\mathtt{\omega}})
\end{equation}

We call this model, the \textit{Local Template Model} on $\tilde{N}(\boldsymbol{\mathtt{\omega}}_0)$. Note that we have used this model to test for edges, because it has the capability to capture the several different kinds of discontinuities. Observe that if $\eta = 0$ we have a continuous function, while the presence of $\eta$ shows the evidence of the discontinuity. The different kinds of discontinuities captured by this model, is shown in Figure 2 for varying values of $(\boldsymbol{\beta},\eta)$. Note that all directional edges can be identified through our model. It is also possible to identify other types of edges such as the `Y' shaped edge by increasing the parameters in the model.

\begin{figure}[!h]
\centering
\includegraphics[scale=0.5]{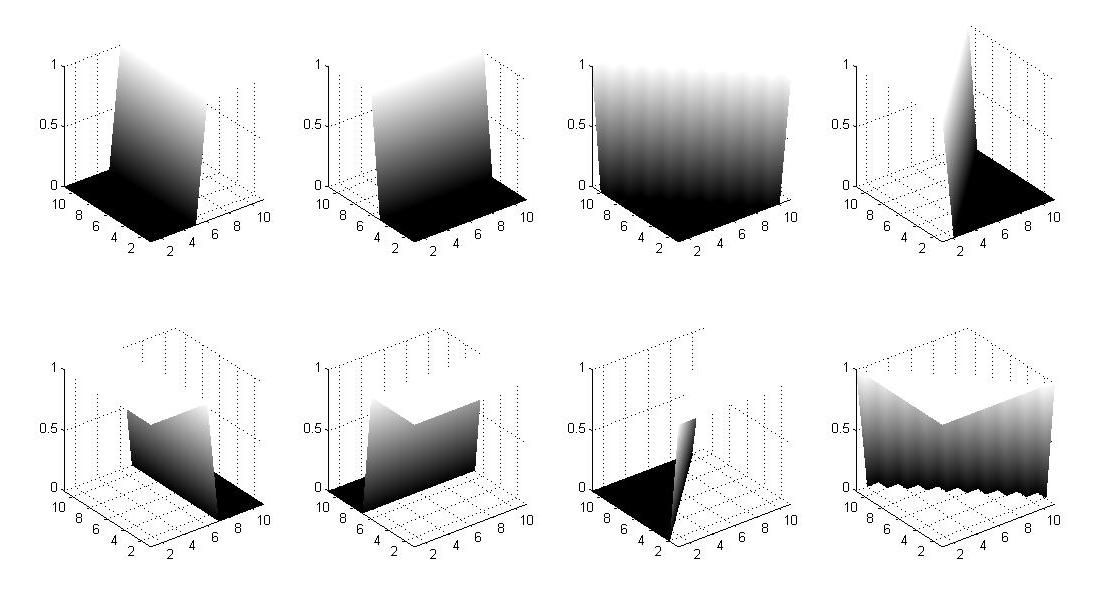}
\caption{Different Kinds of Edges detected by the Local Template Model}
\end{figure}

We first use this model to test for edges at $\boldsymbol{\mathtt{\omega}}_0$. In order to test for edges, we test $H_0 : \eta = 0$ vs. $H_1 : \eta \neq 0$. We perform usual likelihood ratio test \cite{TSH} and calculate the $p$-value. We shift the LTM over the whole image and keep track of all the $p$-values. Using these collection of $p$-values, we perform multiple testing at level $\alpha$ by using the Holms procedure \cite{Holms} to finally obtain the neighborhoods containing the edges. Thus, we obtain the regions, say $\tilde{N}_1, \tilde{N}_2, \ldots, \tilde{N}_k$ which contain edges at each of its center. Let us denote the support function in each of these neighborhoods by $\rho_i$ for $i \in \lbrace1,\ldots k\rbrace$. Note that each $\rho_i$ takes function value 0 outside of $\tilde{N}_i$ for $i \in \left\lbrace 1,\ldots, k\right\rbrace$.

We try to fit the LTM on these neighborhoods. The LTM $p$ acts as a discrete multinomial distribution on $\tilde{N}(\boldsymbol{\mathtt{\omega}}_0)$, when restricted to the pixel sites $\boldsymbol{\mathtt{\omega}}_1, \boldsymbol{\mathtt{\omega}}_2, \ldots, \boldsymbol{\mathtt{\omega}}_D$. Thus, denoting the particular $\rho_i$ by $\rho$ for ease of notation, we have,
\begin{equation}
d(\boldsymbol{\beta},\eta) = \log \left(\sum_{\boldsymbol{\mathtt{\omega}}_i \in \tilde{N}(\boldsymbol{\mathtt{\omega}}_0)} \rho(\boldsymbol{\mathtt{\omega}}_i) \times \exp \left\lbrace \boldsymbol{\beta}'T(\boldsymbol{\mathtt{\omega}}_i\mbt{\omega}_0^{-1}) + \eta|\boldsymbol{\beta}'T(\boldsymbol{\mathtt{\omega}}_i\mbt{\omega}_0^{-1})| \right\rbrace\right)
\end{equation}
Denoting $S = \sum_{\boldsymbol{\mathtt{\omega}}_i \in \tilde{N}(\boldsymbol{\mathtt{\omega}}_0)}Y(\boldsymbol{\mathtt{\omega}}_i)\rho(\boldsymbol{\mathtt{\omega}}_i)$, we have the likelihood function as
\begin{equation}\label{mle}
\begin{split}
L(\boldsymbol{\beta},\eta|\boldsymbol{Y}) &\propto \prod_{\boldsymbol{\mathtt{\omega}}_i \in \tilde{N}(\boldsymbol{\mathtt{\omega}}_0)} p\left(\boldsymbol{\mathtt{\omega}}_i|\boldsymbol{\beta},\eta\right)^{Y(\boldsymbol{\mathtt{\omega}}_i)\rho(\boldsymbol{\mathtt{\omega}}_i)/S}\\
\log(L(\boldsymbol{\beta},\eta|\boldsymbol{Y})) &= C + \sum_{\boldsymbol{\mathtt{\omega}}_i \in \tilde{N}(\boldsymbol{\mathtt{\omega}}_0)}\left(\boldsymbol{\beta}'T(\boldsymbol{\mathtt{\omega}}_i\mbt{\omega}_0^{-1}) + \eta|\boldsymbol{\beta}'T(\boldsymbol{\mathtt{\omega}}_i\mbt{\omega}_0^{-1})|\right)\times \frac{Y(\boldsymbol{\mathtt{\omega}}_i)\rho(\boldsymbol{\mathtt{\omega}}_i)}{S} - d(\boldsymbol{\beta},\eta)
\end{split}
\end{equation}
where $C$ is a constant. Thus we maximize the last equation in \eqref{mle} to obtain $(\hat{\boldsymbol{\beta}},\hat{\eta})$. Denoting this as $(\hat{\boldsymbol{\beta_i}},\hat{\eta}_i)$ for the $i$-th neighborhood, our local estimate is $p(\cdot|\hat{\boldsymbol{\beta_i}},\hat{\eta}_i)$. Transforming this back to the spatial domain, we obtain the ML estimated fit $f_i(\cdot|\hat{\theta}_i)$, of \eqref{model} for $i = 1,\ldots,k$.

\subsection{Optimal Partitions of Unity}
Using the LTM as explained in the previous subsection, we obtain $\tilde{N}_1, \tilde{N}_2, \ldots, \tilde{N}_k$ as the significant edge covering neighborhoods. The local functions are denoted by $\rho_i$ for $i \in \lbrace1,\ldots k\rbrace$. Using these local functions we create a continuous function on the whole image, such that it takes values on $\tilde{N}_i$ for $i \in \lbrace1,\ldots k\rbrace$ and 0 outside. In order to create such a function we scale each $\rho_i$ by a constant $\alpha_i$ and consider their linear sum. The value of $\alpha_i$ for $i \in \lbrace1,\ldots k\rbrace$ is obtained by solving an optimization problem, which can be stated as follows.

Find $\alpha_i$ for $i \in \lbrace1,\ldots k\rbrace$ in order to
\[\begin{array}{rl}
    \text{Maximize} &t\\
    \mbox{subject to}
    & \sum_{i=1}^{k} \alpha_i \rho_i(\boldsymbol{\mathtt{\omega}}_{0j}) \geq t\;\; \forall j \in {1,\ldots,k}\\
                & \sum_{i=1}^{k} \alpha_i \rho_i(\boldsymbol{\mathtt{\omega}}_{0j}) \leq 1\;\; \forall j \in {1,\ldots,k}\\
                & \textnormal{and} \;\; \alpha_i \geq 0 \;\;\forall i \in {1,\ldots,k}
\end{array}
\]
where $\boldsymbol{\mathtt{\omega}}_{0j}$ denotes the center of the $\tilde{N}_j$ for $j = 1, \ldots, k$. 

Denoting the solution to this optimization process as $\hat{\alpha_i}$ for $i \in \lbrace1,\ldots k\rbrace$ we get a function covering the edge points as $P_{e} = \sum_{i = 1}^{l}\hat\alpha_i\rho_i$. This is a continuous function defined on the whole space which covers all the detected edges and takes the maximum value of 1. $P_s = 1 - P_e$, correspondingly covers the smooth region of the image. Furthermore, $\{P_{e}, P_{s}\}$ forms the optimal partition of unity \cite{Munkres}. Thus the final smooth edge-preserving estimate can be written as
\begin{equation}
\hat{f} = \widehat{P_{e}f} + \widehat{P_sf}
\end{equation}
where $\widehat{P_{e}f}$ covers the local features and $\widehat{P_sf}$ denotes the smooth estimate. 

As explained in the previous subsection, if we denote the local estimate of $f\rho_i$ by $\hat{f}_i(\cdot|\hat{\theta}_i)$ as in \eqref{model}, then $\widehat{P_{e}f}$ is given by
\begin{equation}
\widehat{P_{e}f} = \sum_{i=1}^{k}\hat\alpha_i \widehat{f\rho_i} = \sum_{i=1}^{k}\hat\alpha_i \hat{f}_i(\cdot|\hat{\theta}_i)
\end{equation}
which form the first part of the estimate as in \eqref{model}. Now with this estimate in hand, we proceed to estimate the smooth region using the method of TPS regularization via Fourier basis. 

\subsection{Thin Plate Spline (TPS) Regularization via Fourier basis}
We have already discussed the details of the Fourier theory as well as the embedding of the image in the spectral domain in section 2.4. Following the notation therein, we proceed to solve the problem of smooth density estimation by minimizing \eqref{main_opt}.

Following the univariate definition in \eqref{fourier_def}, we can write the Fourier coefficients $u_{k,l}$ of a function $f \in L_2(\mathbb{T}^2,\mbox{d}\boldsymbol{\mathtt{\omega}})$ as,
\begin{equation}
u_{k,l} = \frac{1}{4\pi^2}\int_{\mathbb{T}^2}z^{-k}w^{-l}f(\boldsymbol{\mathtt{\omega}})\;\mbox{d}\boldsymbol{\mathtt{\omega}}\qquad\text{for}\;\; k,l \in \mathbb{Z}
\end{equation}
Now, square summability of the solution of \eqref{main_opt} implies the existence of a density $f$ by the Fourier Inversion Theorem. Thus, by inverse Fourier transformation and considering the density to be real, we can say
\[\begin{split}
f(\boldsymbol{\mathtt{\omega}}) &= \sum_{k=-\infty}^{\infty}\sum_{l=-\infty}^{\infty}u_{k,l} z^{k}w^{l}\\
&=\sum_{k=-\infty}^{\infty}\left( \sum_{l=-\infty}^{-1}u_{k,l}z^{k}w^{l} + u_{k,0}z^{k} +  \sum_{l=1}^{\infty}u_{k,l}z^{k}w^{l}\right)
\end{split}
\]
Now using $u_{-k,-l} = \overline{u_{k,l}}; u_{-k,l} = \overline{u_{k,-l}}$, the above is simplified to,
\begin{equation}\label{fourier_density}
f(\boldsymbol{\mathtt{\omega}}) = u_{0,0} + 2\times\Re\left(\sum_{k=1}^{\infty}u_{k,0}z^{k}  + \sum_{l=1}^{\infty}u_{0,l}w^{l} + \sum_{k=1}^{\infty}\sum_{l=1}^{\infty}u_{k,l} z^{k}w^{l} + \sum_{k=1}^{\infty}\sum_{l=1}^{\infty}u_{k,-l} z^{k}w^{-l}\right)
\end{equation}
where $\Re(z)$ denotes the real part of the complex number $z$.

Thus, the penalty term becomes, $\lambda \int_{\mathbb{T}^2} \left(f_{zz}^2 + f_{ww}^2 + 2f_{zw}^2\right) \mbox{d}\boldsymbol{\omega}$ and the problem of density estimation now becomes the minimization problem of
\begin{equation}
\frac{1}{4\pi^2}\int_{\mathbb{T}^2}|\rho_0(\boldsymbol{\mathtt{\omega}})\,\femp (\boldsymbol{\mathtt{\omega}}) - f(\boldsymbol{\mathtt{\omega}})|^2 \mbox{d}\boldsymbol{\omega} + \lambda \int_{\mathbb{T}^2} \left(f_{zz}^2 + f_{ww}^2 + 2f_{zw}^2\right) \mbox{d}\boldsymbol{\omega}
\end{equation}
in the class $L^2(\mathbb{T}^2, \mbox{d}\boldsymbol{\omega})$. Here we retain the same notation of the empirical even after embedding into $\mathbb{T}^2$. Now putting $u_{k,l} = x_{k,l} + jy_{k,l}$, the problem in the thin plate spline format becomes the minimization problem of
\begin{equation}\label{final_opt}
\begin{split}
\sum_{k=-\infty}^{\infty}\sum_{l=-\infty}^{\infty}|\hat{u}_{k,l} - u_{k,l}|^2  + \lambda\left\lbrace \sum_{k=1}^{\infty} k^4(x_{k,0}^2 + y_{k,0}^2) + \sum_{k=1}^{\infty} l^4(x_{0,l}^2 + y_{0,l}^2)\right\rbrace \\
+ \lambda\left\lbrace\sum_{k=1}^{\infty}\sum_{l=1}^{\infty}(k^2  + l^2)^2\times(x_{k,l}^2 + y_{k,l}^2 + x_{k,-l}^2 + y_{k,-l}^2)\right\rbrace
\end{split}
\end{equation}
Thus, we get a 1-1 correspondence between the curve fitting problem via thin plate spline regularization and the density estimation problem using the Fourier basis. The solution to this minimization problem is given as a theorem below.

\begin{theorem}\label{soln}
Solution to the minimization problem given by equation \eqref{final_opt} above gives rise to a kernel like density estimate given by
\[\hat{f}(\boldsymbol{\mathtt{\omega}}) = \hat{u}_{0,0} + 2\Re\left(\sum_{k=1}^{\infty}\frac{\hat{u}_{k,0}z^{k} }{1 + \lambda k^4}  + \sum_{l=1}^{\infty}\frac{\hat{u}_{0,l} w^{l}}{1 + \lambda l^4} + \sum_{k=1}^{\infty}\sum_{l=1}^{\infty}\frac{\hat{u}_{k,l} z^{k}w^{l} + \hat{u}_{k,-l} z^{k}w^{-l}}{1 + \lambda(k^2 + l^2)^2}\right)\]
\end{theorem}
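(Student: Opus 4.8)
The plan is to observe that the objective functional in \eqref{final_opt} is a separable, strictly convex quadratic in the real and imaginary parts $(x_{k,l}, y_{k,l})$ of the unknown Fourier coefficients $u_{k,l}$, and hence can be minimized coordinate-wise in closed form. First I would group the terms of \eqref{final_opt} according to the index pattern: the ``axis'' coefficients $u_{k,0}$ and $u_{0,l}$ (with penalty weight $k^4$, resp.\ $l^4$) and the ``bulk'' coefficients $u_{k,l}, u_{k,-l}$ for $k,l \geq 1$ (with common penalty weight $(k^2+l^2)^2$), together with the unpenalized term $|\hat u_{0,0} - u_{0,0}|^2$. Because the data-fidelity sum $\sum_{k,l}|\hat u_{k,l}-u_{k,l}|^2$ also splits across exactly these same blocks — here one uses the Hermitian symmetry $u_{-k,-l}=\overline{u_{k,l}}$, $u_{-k,l}=\overline{u_{k,-l}}$ recorded just before \eqref{fourier_density}, so that only the ``half-spectrum'' coefficients are free parameters and each appears with multiplicity two — the whole functional decomposes as an infinite sum of independent scalar problems, one per free coefficient.

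Next I would solve each scalar subproblem. For a generic penalized block the function to minimize has the form $2\big((x-\hat x)^2+(y-\hat y)^2\big) + 2\mu\,(x^2+y^2)$ for the appropriate penalty weight $\mu \in \{k^4,\ l^4,\ (k^2+l^2)^2\}$; setting the gradient to zero gives $x = \hat x/(1+\mu)$, $y=\hat y/(1+\mu)$, i.e.\ $u_{k,l} = \hat u_{k,l}/(1+\mu)$ after recombining real and imaginary parts. For the constant term $\mu=0$, giving $u_{0,0}=\hat u_{0,0}$. Substituting these minimizers back into the inverse-transform representation \eqref{fourier_density} and collecting the shrinkage factors $1/(1+\lambda k^4)$, $1/(1+\lambda l^4)$, $1/(1+\lambda(k^2+l^2)^2)$ against the respective sums yields exactly the stated formula for $\hat f(\mbt{\omega})$; the two bulk sums $\sum u_{k,l}z^kw^l$ and $\sum u_{k,-l}z^kw^{-l}$ share the same denominator $1+\lambda(k^2+l^2)^2$ and so may be written under a single fraction as in the theorem.

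The one genuinely non-routine point — and the step I would treat carefully rather than wave through — is the passage from the coordinate-wise minimizer to a bona fide minimizer in $L^2(\mathbb{T}^2,\domg)$. Here I would argue: (i) the penalized objective is bounded below by $0$ and is lower-semicontinuous and strictly convex on the affine subspace of $\ell_2(\mathbb{Z}^2)$ cut out by the Hermitian symmetry constraints, so a minimizer, if it exists, is unique; (ii) the candidate sequence $\{\hat u_{k,l}/(1+\lambda\,\mu_{k,l})\}$ lies in $\ell_2$ because the shrinkage factors are bounded by $1$ and $\{\hat u_{k,l}\}\in\ell_2$ by square-summability of the empirical transform (the remark preceding \eqref{fourier_density}), whence by Fourier inversion it is the coefficient sequence of an honest function $\hat f\in L^2(\mathbb{T}^2)$, in fact one whose TPS penalty $\int(f_{zz}^2+f_{ww}^2+2f_{zw}^2)\domg$ is finite since the extra factor $\mu_{k,l}^2/(1+\lambda\mu_{k,l})^2$ is bounded; (iii) by the decoupling established above, any competitor has objective value at least the termwise sum of the scalar minima, with equality iff it agrees with the candidate in every coordinate, so the candidate is the minimizer. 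Convergence of the interchange of $\arg\min$ with the infinite sum is justified by monotone/dominated convergence on the partial sums. This completes the identification of the solution with the claimed kernel-like estimate.
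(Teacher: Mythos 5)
Your proposal follows essentially the same route as the paper's proof: exploit the Hermitian symmetry $u_{-k,-l}=\overline{u_{k,l}}$, $u_{-k,l}=\overline{u_{k,-l}}$ to reduce the objective to a separable sum over the free half-spectrum coefficients, minimize each quadratic block coordinate-wise in $(x_{k,l},y_{k,l})$ by setting the gradient to zero, and substitute the resulting shrinkage factors back into the inversion formula \eqref{fourier_density}. Your added point (iii) on existence, uniqueness and the legitimacy of termwise minimization in $\ell_2$ is a welcome tightening that the paper's proof passes over silently, but it does not change the argument's structure.
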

\begin{proof}
See Appendix.
\end{proof}

The optimal value of $\lambda$ is obtained via a grid search, which minimizes, the integrated mean squared error. That is, we choose that value of $\lambda$ for which,
\begin{equation}
\int_{\mathbb{T}^2}\mathbb{E}\big|\hat{f}_\lambda(\boldsymbol{\mathtt{\omega}}) - f(\boldsymbol{\mathtt{\omega}})\big|^2\mbox{d}\boldsymbol{\omega}
\end{equation}
is minimized. The estimate of the smooth region in $\mathcal{D}$ is obtained by the inverse transformation of the density estimate of Theorem \eqref{soln}. 

\section{Implementation}
In this section we apply our methodology on the image of Lenna of size $512 \times 512$. We have chosen the Local Template Model on an $11\times 11$ square. We first define a function $\tilde{\rho}$ on the window and transform it into our required function $\rho$ on $\mathcal{D}$ and subsequently on $\mathbb{T}^2$. The choice of the local function $\tilde{\rho}$ on the $11\times 11$ window is delicate. We try to put maximum weight on the center and gradually decrease it. The function $\tilde{\rho}$ is defined as the convolution of two functions $h_1$ and $h_2$ each defined on a $6\times 6$ window. That is 
\begin{equation}
\tilde{\rho}(n_1,n_2) = \sum_{k_1 = -\infty}^{\infty}\sum_{k_2 = -\infty}^{\infty}h_1(k_1,k_2)h_2(n_1 - k_1, n_2 - k_2)
\end{equation}
for $n_1,n_2 \in \left\lbrace 1,\ldots,11\right\rbrace $. The functions $h_1$ and $h_2$ are defined as follows. 
\begin{equation}
h_1(n_1,n_2) = 
\begin{cases}
0 & n_1,n_2 \not\in \left\lbrace 2,\ldots,5\right\rbrace\\
0.5 & n_1 = \left\lbrace 2,5\right\rbrace\;\; n_2 = \left\lbrace 2,\ldots,5\right\rbrace\\
0.5 & n_2 = \left\lbrace 2,5\right\rbrace\;\; n_1 = \left\lbrace 2,\ldots,5\right\rbrace\\
1 & \text{otherwise}
\end{cases}
\end{equation}
which is a trapezoid on the $6\times 6$ window. Denoting the center $(3.5,3.5)$ as $(c_1,c_2)$, we define,
\begin{equation}
h_2(n_1,n_2) = 
\begin{cases}
\exp\left\lbrace -\tan\left(\frac{\pi(n_1-c_1)}{2\tau}\right)^2 \right\rbrace \times \exp\left\lbrace -\tan\left(\frac{\pi(n_2-c_2)}{2\tau}\right)^2 \right\rbrace  & n_1,n_2 \in \left\lbrace 1,\ldots,6\right\rbrace\\
0 & \text{otherwise}
\end{cases}
\end{equation}
We have chosen the value of $\tau = 5$. However, this choice can be changed and is left upto the user. $\tau$ controls the spread of the final convoluted function $\tilde{\rho}$. Figure 3, shows the graph of $\tilde{\rho}$. This function is rescaled into $\mathcal{D}$ and subsequently to $\mathbb{T}^2$ to get the local function $\rho$.

\begin{figure}[!h]
\centering
\includegraphics[scale=0.55]{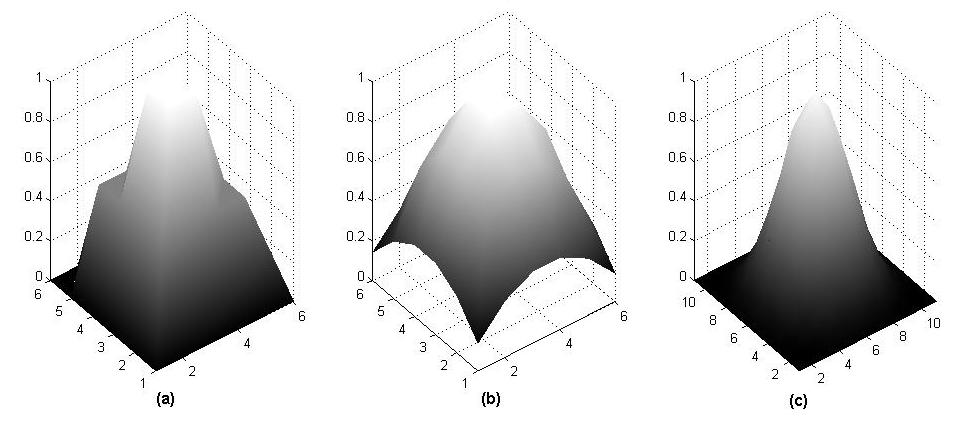}
\caption{(a) The trapezoid function on square of size $6\times 6$; (b) The function $h(x,y)$ on a square of size $6\times 6$; (c) The final convoluted function $\tilde{\rho}(x,y)$ for the $11\times 11$ window}
\end{figure}

Using this local function $\rho$ in the LTM, we execute the edge detection algorithm as explained in section 3.1. We start with the center at pixel position $(6,6)$ and shift the template by 3 pixels for the next iteration. We keep track of all the $p$-values from the LRT, and finally obtain the edge pixels using the Holms procedure of multiple testing at level $\alpha = 0.01$. Note that the extreme borders of the image are inherently considered as edges. Using the detected and inherent windows we create the partition of unity by solving the linear programming problem stated in section 3.2. Let us denote the function covering the detected edge pixels by $P_{e}$. Now, instead of showing just the centre of the detected window as edge points, we perform a neat trick to get the edge lines.

For each detected edge pixel $\boldsymbol{\mathtt{\omega}}_0$, we mark the pixels in $N(\boldsymbol{\mathtt{\omega}}_0)$ which lie on the straight line, $\hat{\boldsymbol{\beta}}'(\boldsymbol{\mathtt{\omega}} - \boldsymbol{\mathtt{\omega}}_0)$, where $\hat{\boldsymbol{\beta}}$ are obtained by maximizing the likelihood in the LTM model restricted to $N(\boldsymbol{\mathtt{\omega}}_0)$. This gives the direction of the edge in $N(\boldsymbol{\mathtt{\omega}}_0)$. Now we put additional weights on these pixel positions by the function $P_{e}$. It has been seen experimentally, that all points with function value greater than 0.8 gives the best visual representation of the edge locations. Figure 4 shows the original image of Lenna, followed by the detected edges in Figure 5.

\begin{figure}[!h]
\begin{minipage}[b]{0.5\linewidth}
\centering
\includegraphics[width = 5cm]{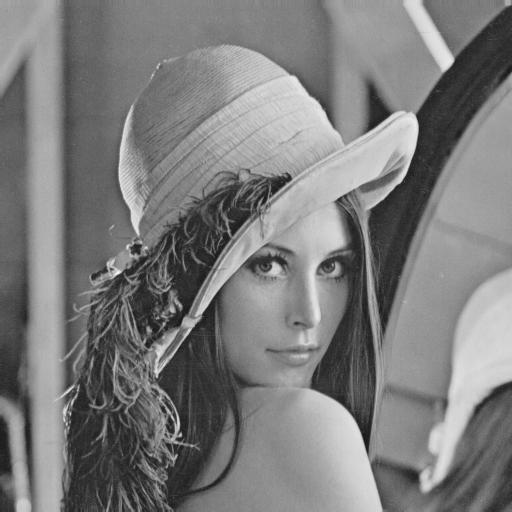}
\caption{Original Image of Lenna}
\end{minipage}
\hspace{0.5cm}
\begin{minipage}[b]{0.5\linewidth}
\centering
\includegraphics[width = 5cm]{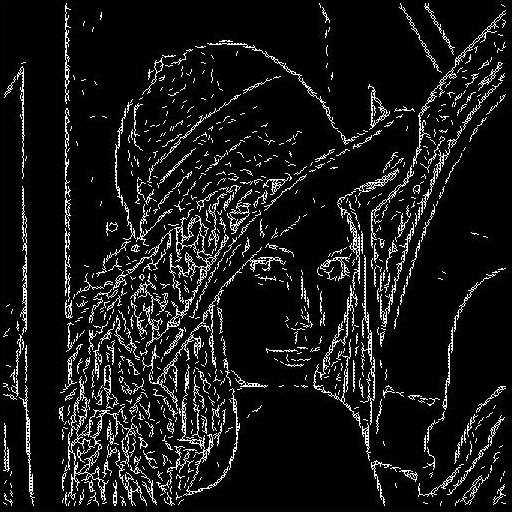}
\caption{Image after edge detection using LTM}
\end{minipage}
\end{figure}

We fit the LTM on the detected edges, to get the edge estimates using the method explained in Section 3.1. The fitted edges are shown in Figure 6. The remaining image, i.e. $P_sf$ (shown in figure 7), is estimated using the TPS regularization via the Fourier basis. The black patches in Figure 7 denotes the regions of the edges which have been omitted in the Fourier application to prevent Gibbs phenomenon \cite{Hewitt} from occurring.

\begin{figure}[!h]
\begin{minipage}[b]{0.5\linewidth}
\centering
\includegraphics[width = 5cm]{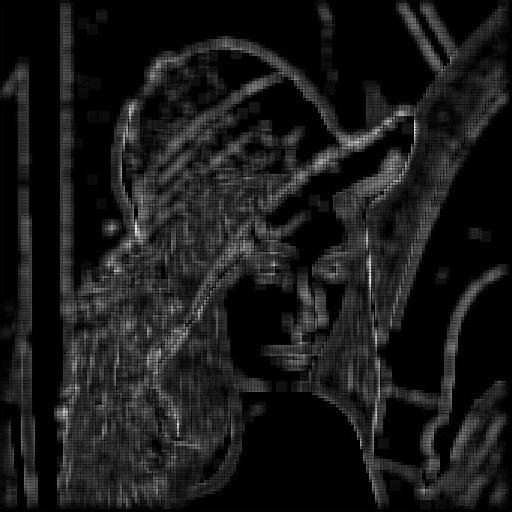}
\caption{Fitted Edges using the Local Template Model}
\end{minipage}
\hspace{0.5cm}
\begin{minipage}[b]{0.5\linewidth}
\centering
\includegraphics[width = 5cm]{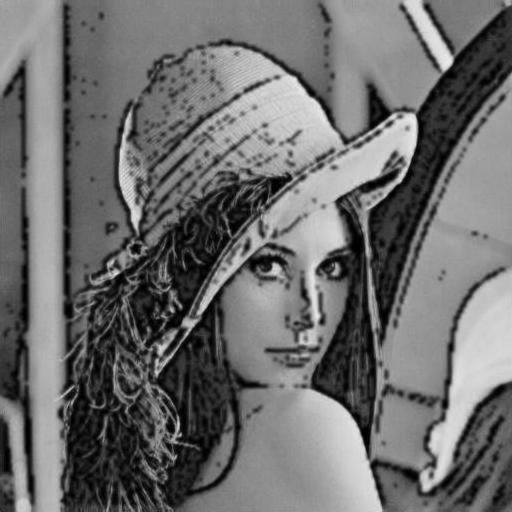}
\caption{Input image for the Fourier technique}
\end{minipage}
\end{figure}

Figure 8 shows the output from the Fourier technique as explained in Section 3.3. The final smooth edge-preserving density estimate is shown in Figure 9. This is obtained by adding the two estimates, viz. $\widehat{P_{e}f} + \widehat{P_sf}$.

\begin{figure}[!h]
\begin{minipage}[b]{0.5\linewidth}
\centering
\includegraphics[width = 5cm]{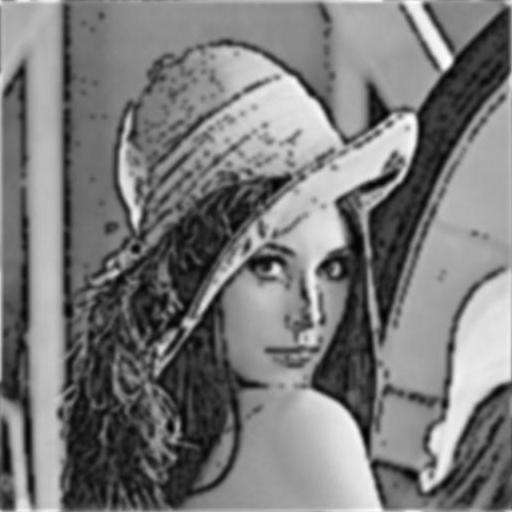}
\caption{The estimate through Penalized bivariate Fourier Transform}
\end{minipage}
\hspace{0.5cm}
\begin{minipage}[b]{0.5\linewidth}
\centering
\includegraphics[width = 5cm]{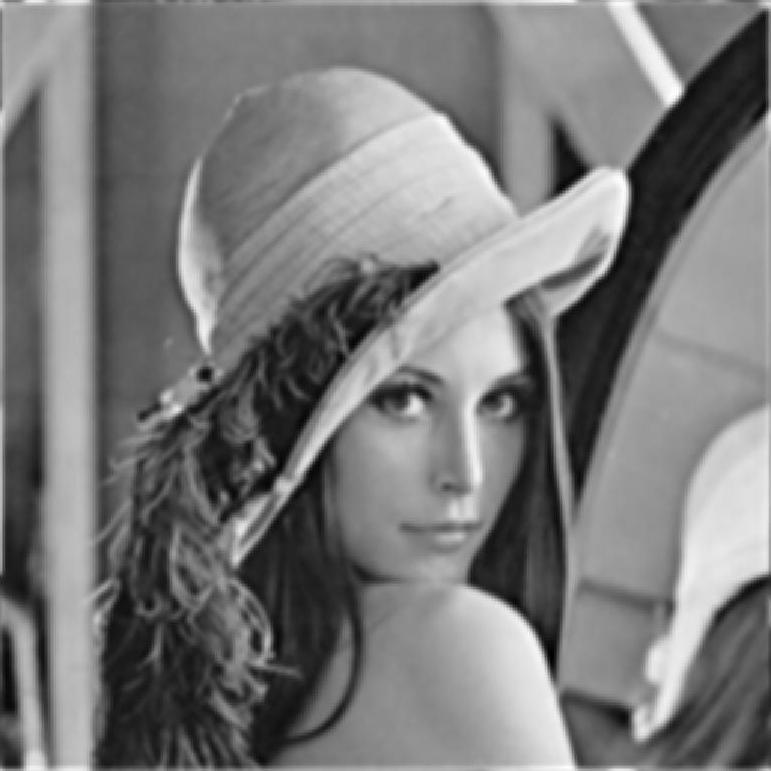}
\caption{Final Smooth Edge-Preserving Density Estimate}
\end{minipage}
\end{figure}

\section{Experimental Results}
\subsection{Outputs from SSH Algorithm on Noisy Images}
A quick browse through the literature in this field shows that there have been several methods which have been proposed for dealing with edge-preserving function estimation for noisy images. Few such methods are based on M Estimators \cite{Chu}, TM Smoothers \cite{Hillebrand}, 2-D discrete wavelet transforms \cite{Chaganti}, etc. The fact against using Fourier transform was that while it smoothens the noise, the edges are not well preserved because of the Gibbs phenomenon. However, in our methodology, since we separate the edges before going into the TPS regularization based on Fourier basis, it is only logical to see how well, our method works for noisy images.

Instead of adding artificial salt and pepper noise to the data, we create noisy images through repeated simulations by Gibbs sampling. We consider the original image to be a bi-variate distribution and using Gibbs sampling we draw a large enough sample of size $n = 50\times image \;size$, from this distribution. The frequency plot of these points, give us an image. We consider this as the noisy image and proceed with our methodology. The initial original and noisy images are shown in Figure 10.

\begin{figure}[!h]
\centering
\includegraphics[scale=0.7]{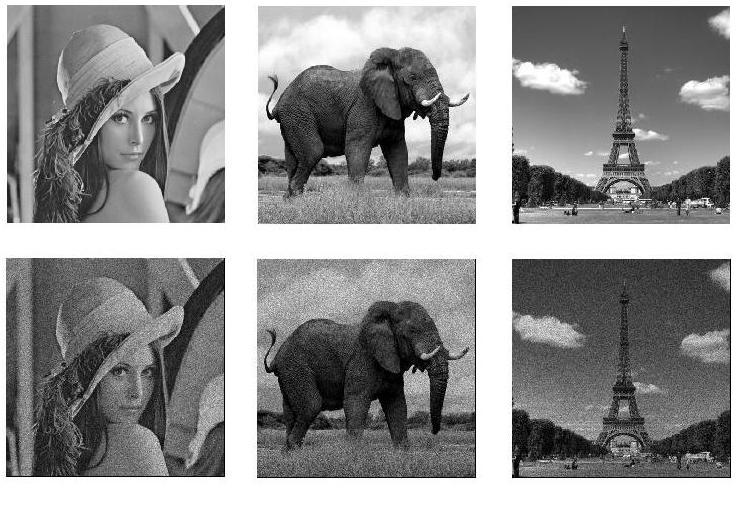}
\caption{The first row shows the original images. The second row shows the images with noise from the Gibbs sampling}
\end{figure}

Using these noisy images we execute our algorithm. The final results are shown in Figures 11,12 and 13. Note that the edge detection methodology captures most of the noise and removes it from the image during edge estimation. As a result when applying the Fourier transformations, the noisy regions are skipped. This results in a smooth and less noisy edge-preserving image in the final reconstruction, as it can be seen from the final estimates of each of the three figures. Hence, our image reconstruction methodology is somewhat robust to artificial noise, which is not the case for the direct use of the Fourier transformation.

\begin{figure}[!h]
\centering
\includegraphics[scale=0.4]{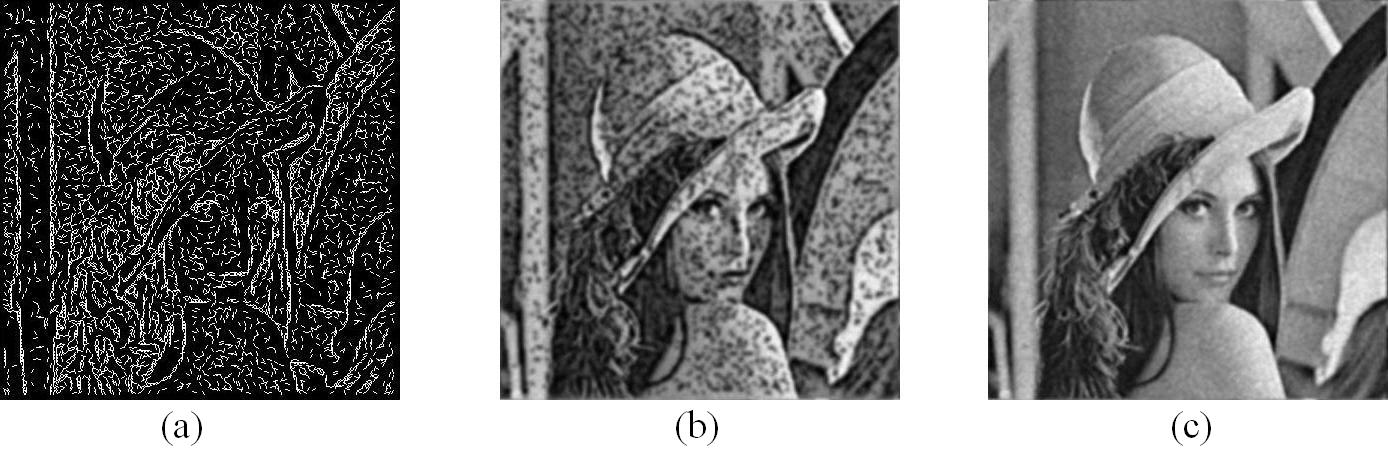}
\caption{(a) Edges Estimated in the Noisy Image of Lenna; (b) The Smooth Estimate through Fourier transform; (c) The final edge-preserving smooth density Estimate}
\end{figure}

\begin{figure}[!h]
\centering
\includegraphics[scale=0.4]{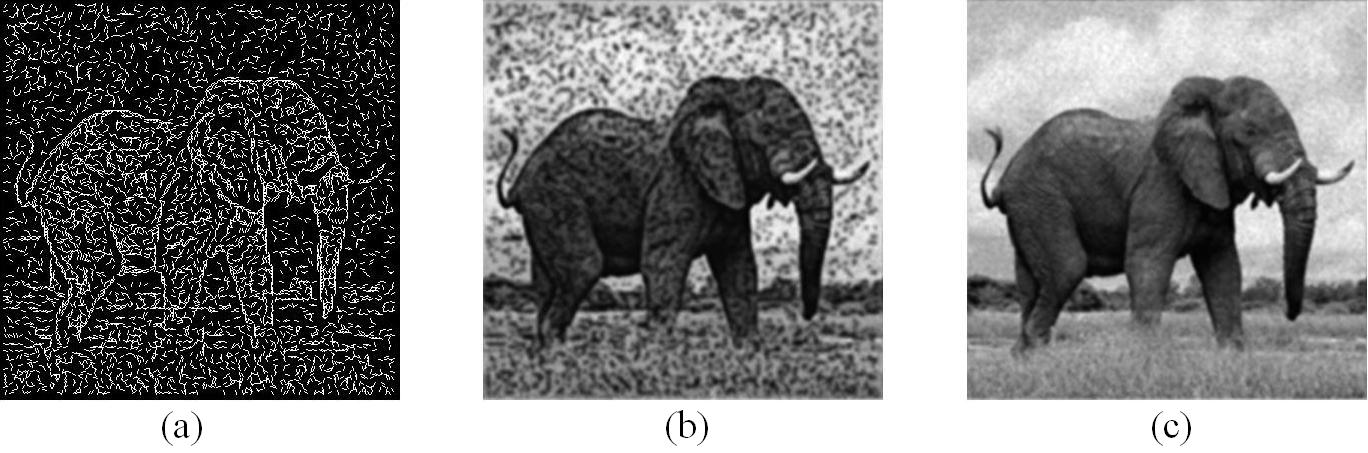}
\caption{(a) Edges Estimated in the Noisy Image of African Elephant; (b) The Smooth Estimate through Fourier transform; (c) The final edge-preserving smooth density Estimate}
\end{figure}

\begin{figure}[!h]
\centering
\includegraphics[scale=0.4]{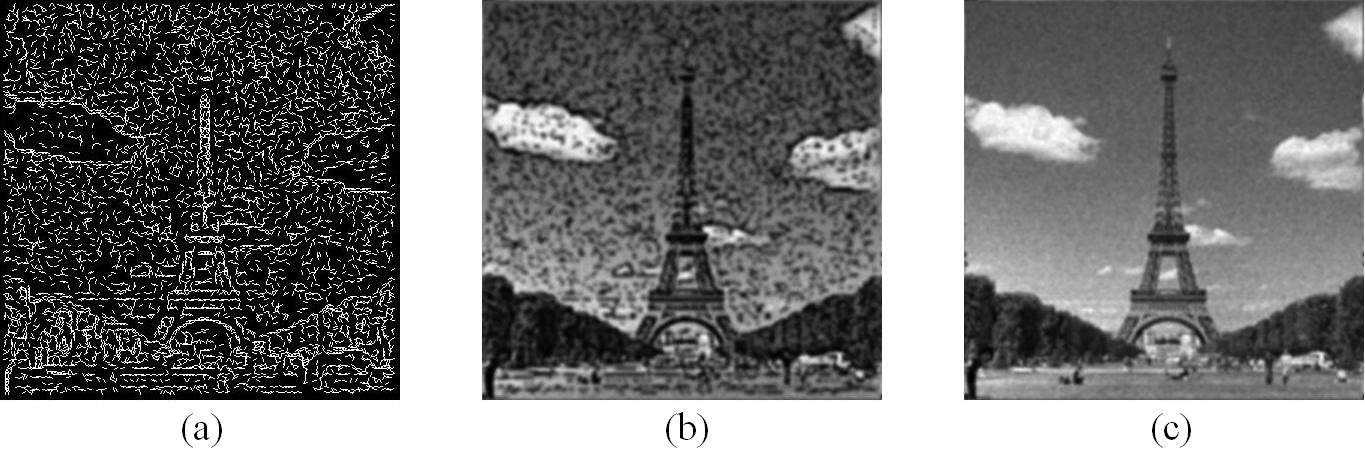}
\caption{(a) Edges Estimated in the Noisy Image of Eiffel Tower; (b) The Smooth Estimate through Fourier transform; (c) The final edge-preserving smooth density Estimate}
\end{figure}

\subsection{Error Analysis}
As explained in the previous subsection, using the Gibbs Sampling technique we draw a sample of size $T$ from the original bi-variate distribution. These $T$ points give us an image. We repeat this procedure to obtain $N = 100$ such images. For each of these images we estimate the final edge-preserving smooth density. The choice of the sample size in our simulations is an important factor. Considering the image to be of size $k\times k$ we choose the sample sizes as $T = m \times k \times k$ where $m = 10, 20, 50$ and $100$.

In this subsection we calculate the discretized mean square error (DMSE), introduced in \eqref{dmse_def}, of this estimate and tabulate it in Table 1. We also calculate and report the \textit{within sample variance} of the estimate in Table 2. Furthermore, the ratio of the DMSE to the within sample variance gives us an idea about the coefficient of variation which we report in Table 3.

\begin{table}[!h]
\centering
\caption{Discretized Mean Square Error (DMSE)}
\begin{tabular}{|@{\hspace{0.25cm}}c@{\hspace{0.25cm}}|@{\hspace{0.25cm}}c@{\hspace{0.25cm}}|@{\hspace{0.25cm}}c@{\hspace{0.25cm}}|@{\hspace{0.25cm}}c@{\hspace{0.25cm}}|}
\hline
Sample Size & Lenna's & African & Eiffel\\
$m$ $(T = mk^2)$  &  Image & Elephant & Tower\\
\hline
10&0.1773&0.1643&0.1803\\
20&0.1164&0.1034&0.1277\\
50&0.0834&0.0758&0.0878\\
100&0.0712&0.0685&0.0731\\
\hline
\end{tabular}
\end{table}

\begin{table}[!h]
\centering
\caption{Within sample variance for the different images}
\begin{tabular}{|@{\hspace{0.25cm}}c@{\hspace{0.25cm}}|@{\hspace{0.25cm}}c@{\hspace{0.25cm}}|@{\hspace{0.25cm}}c@{\hspace{0.25cm}}|@{\hspace{0.25cm}}c@{\hspace{0.25cm}}|}
\hline
Sample Size & Lenna's & African & Eiffel\\
$m$ $(T = mk^2)$  &  Image & Elephant & Tower\\
\hline
10&0.0936&0.0901&0.1006\\
20&0.0452&0.0395&0.0531\\
50&0.0179&0.0121&0.0231\\
100&0.0090&0.0089&0.0102\\
\hline
\end{tabular}
\end{table}

\begin{table}[!h]
\centering
\caption{Ratio of DMSE to Within Sample Variance}
\begin{tabular}{|@{\hspace{0.25cm}}c@{\hspace{0.25cm}}|@{\hspace{0.25cm}}c@{\hspace{0.25cm}}|@{\hspace{0.25cm}}c@{\hspace{0.25cm}}|@{\hspace{0.25cm}}c@{\hspace{0.25cm}}|}
\hline
Sample Size & Lenna's & African & Eiffel\\
$m$ $(T = mk^2)$  &  Image & Elephant & Tower\\
\hline
10&1.8950&1.8235&1.7922\\
20&2.5777&2.6177&2.4048\\
50&4.6667&6.2644&3.8008\\
100&7.9427&7.6966&7.1667\\
\hline
\end{tabular}
\end{table}

We also report the average time taken by our SSH Algorithm. All our simulations have been performed in MATLAB 7.10.0 on Windows platform, with a 4GB RAM machine and Intel Core 2 Duo 2.5 GHz Processor. The average time for detecting the edges in a $512 \times 512$ image using a window of size $11\times 11$ took about 45 seconds, followed by the local template fitting accounting for another 30 seconds. During this time, we simultaneously create the optimal partition of unity. The Fourier methodology via TPS regularization took about 2 mins to complete. Thereby we get the whole density estimate in about 3 minute and 10 seconds on the average.

\section{Some Concluding Remarks}
In this paper, we have discussed an unsupervised novel statistical methodology for image restoration and compression by using local parametric mixtures and Lagrangian relaxation. Considering the thin plate spline regularization we showed the equivalence of the density estimation problem to the TPS problem. Since Fourier compression fails to restore a piece-wise smooth image due to Gibbs phenomenon, we adapt our methodology by using the topological concept of \textit{partition of unity}.
	
We consider the image as a histogram data from a bi-variate distribution. Edge detection and estimation is performed through a local parametric model on a small window, which we term as the Local Template Model. Holms procedure of multiple testing is used to obtain the regions containing the edge pixels. The optimal partition of unity is created by proper scaling of the local weight functions. After edge estimation through the LTM, the remaining smooth region is estimated by the penalized bi-variate Fourier transform. The final estimate is obtained by adding the local and the smooth estimates.

We implement our algorithm on the image of Lenna, and we are able to get the output in two distinct channels, viz. the edge pixels and the Fourier coefficients with the thresholding sequence. We have also performed an error analysis of our methodology. The final reconstructed image is smooth and edge-preserving, even in the presence of 'salt and pepper' noise. Using repeated simulations, we have also calculated the coefficient of variation for each of the three layers of the output - the edge pixels, the Fourier smoothing and the final restored image.

One of the practical usefulness of Fourier transforms depend on their fast convergence to the limit in \eqref{fourier_inv} which depends on the rate of decay of Fourier coefficients. As can be seen form \eqref{fourier_deriv}, the roughness in the signal reflected by high $L_2$ norm of the derivatives creates a hindrance for quick approximation. The approximations also tend to be under-smoothed in case a large number of terms need to be added in the right hand side of \eqref{fourier_inv} for an accurate approximation for noisy signals. Fast decaying filters applied to the raw coefficients increases the bias. The worst case scenario occurs when the original signal is piece-wise smooth with jump discontinuities. In such cases the approximation suffers from what is known as Gibbs phenomenon. Details of Gibbs phenomenon for the standard saw-tooth signals can be found in \cite{Hewitt,Korner}. One of the objections raised against spectral embedding of spatial signals is that the decay of the coefficients slows down due to unknown location of discontinuities. One known discontinuity always exists at $z=1$ unless the signal is periodic. This is called \textit{edge effect}. It is not a difficult task to remove the local ripple due to the edge discontinuity. One of the novelty of the method presented in this article is to demonstrate that we can extract the smooth periodic part of the signal after eliminating unknown number of local ripples after searching through the significant the signal using the parametric local templates quite effectively. The Fourier approximation performs quite well on the extracted smooth  periodic portion of the signal even under noisy environments.

Although our methodology works well, it has few a limitation. The parametric model that we have chosen for the LTM, has the ability to capture any linear directional curved edges, as seen in Figure 1. However, in to capture a `Y' shaped edge we need to add more parameters to the model. Furthermore, our edge estimation technique fails to capture neat edges in the presence of noise. But, that is not a hindrance, as the final reconstructed image even in the presence of noise is a smooth and edge preserving estimate, which was our initial goal.

We hope this work helps future researchers in the endeavors especially in the field of image restoration and compression.

\section*{Appendix}
In this appendix we provide the detailed proof of Theorem \ref{soln} and the complete algorithm for the edge-preserving smooth density estimate.
\subsection*{Proof of Theorem \ref{soln}}
\begin{proof}
We need to minimize,
\begin{equation}
\begin{split}
\sum_{k=-\infty}^{\infty}\sum_{l=-\infty}^{\infty}|\hat{u}_{k,l} - u_{k,l}|^2  + \lambda\left\lbrace \sum_{k=1}^{\infty} k^4(x_{k,0}^2 + y_{k,0}^2) + \sum_{k=1}^{\infty} l^4(x_{0,l}^2 + y_{0,l}^2)\right\rbrace \\
+ \lambda\left\lbrace\sum_{k=1}^{\infty}\sum_{l=1}^{\infty}(k^2  + l^2)^2\times(x_{k,l}^2 + y_{k,l}^2 + x_{k,-l}^2 + y_{k,-l}^2)\right\rbrace
\end{split}
\end{equation}
Using $u_{-k,-l} = \overline{u_{k,l}}; u_{-k,l} = \overline{u_{k,-l}}$, the first term of the above expression can be simplified as follows

\[
\begin{split}
\sum_{k,l=-\infty}^{\infty}|\hat{u}_{k,l} - u_{k,l}|^2 &= \sum_{k=-\infty}^{\infty}\sum_{l=1}^{\infty}|\hat{u}_{k,-l} - u_{k,-l}|^2 + \sum_{k=-\infty}^{\infty}|\hat{u}_{k,0} - u_{k,0}|^2 + \sum_{k=-\infty}^{\infty}\sum_{l=1}^{\infty}|\hat{u}_{k,l} - u_{k,l}|^2\\
&= \sum_{l=1}^{\infty}\sum_{k=1}^{\infty}|\hat{u}_{-k,-l} - u_{-k,-l}|^2 + \sum_{l=1}^{\infty}|\hat{u}_{0,-l} - u_{0,-l}|^2 + \sum_{l=1}^{\infty}\sum_{k=1}^{\infty}|\hat{u}_{k,-l} - u_{k,-l}|^2\\
&\;\;\;\;+ \sum_{k=1}^{\infty}|\hat{u}_{-k,0} - u_{-k,0}|^2 + |\hat{u}_{0,0} - u_{0,0}|^2 + \sum_{k=1}^{\infty}|\hat{u}_{k,0} - u_{k,0}|^2\\
&\;\;\;\;+\sum_{l=1}^{\infty}\sum_{k=1}^{\infty}|\hat{u}_{-k,l} - u_{-k,l}|^2 + \sum_{l=1}^{\infty}|\hat{u}_{0,l} - u_{0,l}|^2 + \sum_{l=1}^{\infty}\sum_{k=1}^{\infty}|\hat{u}_{k,l} - u_{k,l}|^2\\
\end{split}
\]
Thus we get,
\begin{equation}
\begin{split}
\sum_{k,l=-\infty}^{\infty}|\hat{u}_{k,l} - u_{k,l}|^2 &= |\hat{u}_{0,0} - u_{0,0}|^2 + 2\sum_{l=1}^{\infty}\sum_{k=1}^{\infty}|\hat{u}_{k,l} - u_{k,l}|^2 + 2\sum_{l=1}^{\infty}\sum_{k=1}^{\infty}|\hat{u}_{k,-l} - u_{k,-l}|^2\\
&\;\;\;\;+2\sum_{l=1}^{\infty}|\hat{u}_{0,l} - u_{0,l}|^2 + 2\sum_{k=1}^{\infty}|\hat{u}_{k,0} - u_{k,0}|^2
\end{split}
\end{equation}
Let $\hat{u}_{k,l} = \overline{\cos(k\theta + l\gamma)} + j\overline{\sin(k\theta + l\gamma)}$ and $u_{k,l} = x_{k,l} + jy_{k,l}$, where $\overline{\phi(\theta,\gamma)}$ denotes the mean of $\phi$ over both $\theta$ and $\gamma$. Now the problem reduces to minimizing,

\begin{equation}
\begin{split}
&\sum_{l=1}^{\infty}\sum_{k=1}^{\infty}\left\lbrace\left(\overline{\cos(k\theta + l\gamma)} - x_{k,l}\right)^2 + \left(\overline{\sin(k\theta + l\gamma)} - y_{k,l}\right)^2 + \left(\overline{\cos(k\theta - l\gamma)} - x_{k,l}\right)^2 \right\rbrace\\
&+ \sum_{l=1}^{\infty}\sum_{k=1}^{\infty}\left\lbrace + \left(\overline{\sin(k\theta - l\gamma)} - y_{k,l}\right)^2 \right\rbrace + \sum_{k=1}^{\infty}\left\lbrace\left(\overline{\cos(k\theta)} - x_{k,0}\right)^2 + \left(\overline{\sin(k\theta)} - y_{k,0}\right)^2\right\rbrace\\
&+ \sum_{l=1}^{\infty}\left\lbrace\left(\overline{\cos(l\gamma)} - x_{0,l}\right)^2 + \left(\overline{\sin(l\gamma)} - y_{0,l}\right)^2\right\rbrace + \lambda\left\lbrace \sum_{k=1}^{\infty} k^4(x_{k,0}^2 + y_{k,0}^2) + \sum_{k=1}^{\infty} l^4(x_{0,l}^2 + y_{0,l}^2)\right\rbrace \\
&+ \lambda\left\lbrace\sum_{k=1}^{\infty}\sum_{l=1}^{\infty}(k^2 +l^2)^2\times(x_{k,l}^2 + y_{k,l}^2 + x_{k,-l}^2 + y_{k,-l}^2)\right\rbrace
\end{split}
\end{equation}
Differentiating this with respect to $x_{k,l}$ and $y_{k,l}$ and for each $k,l$, and equating to zero, we get,
\begin{equation}
\begin{split}
\widehat{x_{k,0}} = \frac{\overline{\cos(k\theta)}}{1 + \lambda k^4} \qquad &\textnormal{and} \qquad \widehat{y_{k,0}} = \frac{\overline{\sin(k\theta)}}{1 + \lambda k^4}\\
\widehat{x_{0,l}} = \frac{\overline{\cos(l\gamma)}}{1 + \lambda l^4} \qquad &\textnormal{and} \qquad \widehat{y_{0,l}} = \frac{\overline{\sin(l\gamma)}}{1 + \lambda l^4}\\
\widehat{x_{k,l}} = \frac{\overline{\cos(k\theta + l\gamma)}}{1 + \lambda(k^2 +l^2)^2} \qquad &\textnormal{and} \qquad \widehat{y_{k,l}} = \frac{\overline{\sin(k\theta + l\gamma)}}{1 + \lambda(k^2 +l^2)^2}\\
\widehat{x_{k,-l}} = \frac{\overline{\cos(k\theta - l\gamma)}}{1 + \lambda(k^2 +l^2)^2} \qquad &\textnormal{and} \qquad \widehat{y_{k,-l}} = \frac{\overline{\sin(k\theta - l\gamma)}}{1 + \lambda(k^2 +l^2)^2}\\
\end{split}
\end{equation}
Thus the final estimate using \eqref{fourier_density} is
\[\hat{f}(\mbt{\omega}) = \hat{u}_{0,0} + 2\Re\left(\sum_{k=1}^{\infty}\frac{\hat{u}_{k,0}z^{k} }{1 + \lambda k^4}  + \sum_{l=1}^{\infty}\frac{\hat{u}_{0,l} w^{l}}{1 + \lambda l^4} + \sum_{k=1}^{\infty}\sum_{l=1}^{\infty}\frac{\hat{u}_{k,l} z^{k}w^{l} + \hat{u}_{k,-l} z^{k}w^{-l}}{1 + \lambda(k^2 +l^2)^2}\right)\]
\\\\
Hence, proved.
\begin{flushright}$\blacksquare$\end{flushright}

\end{proof}
\subsection*{Implementation Algorithm}
We now give a detailed algorithm of our implementation to obtain the edge preserving smooth density estimate.
\\\\
\begin{algorithm}[H]
 \SetAlgoLined
 \KwData{The grey value image $I$}
 $sz$ = Size of the image matrix\;
 $N = 5$, To create the $11\times 11$ matrix for the LTM\;
 Create the function $\tilde{\rho}$ on the $11\times 11$ window and store in $\rho$\;
 \For{i from $N+1$ to $sz_1 - N - 1$ with a jump of $N/2$}{
	\For{j from $N+1$ to $sz_2 - N - 1$ with a jump of $N/2$}{
		Put $center = (i,j)$\;
		Perform likelihood ratio test for $\eta = 0$ vs $\eta \neq 0$ using the model in \eqref{ltm} and store result in $pval(i,j)$\;
	} 
 }
Execute the Holms procedure with these $p$-values and store the result in $test$. This variable $test$ becomes a 0-1 matrix denoting all location with 1 as edges\;
Put 1 on all the extreme border pixels of the image\;
 $t_1 = $ sum total of all elements of test\;
 Solve the linear programming problem posed in section 3.2 by increasing $t$ over a grid as $0:0.005:1$ and store the result in $\alpha$ which is a vector of length $t_1$\;
 Initialize $g, P_e$ as zero matrices of size $sz$\;
 Initialize $count = 1$\; 
 
 \For{i from $N+1$ to $sz_1 - N - 1$ with a jump of $N/2$}{
	\For{j from $N+1$ to $sz_2 - N - 1$ with a jump of $N/2$}{
		Put $center = (i,j)$\;
		\If{$test(i,j) = 1$}{
			$S_{count} = $sum of pixel values in $I$ restricted to the $11\times 11$ window with center at $(i,j)$\;
			Estimate the local density using the LTM and store it in $f_{count}$\;
			Increment $g$ by $\alpha_{count}\times f_{count} \times S_{count}$\;
			Increment $P_e$ by $\alpha_{count}\times \rho$\;
			Increment $count$ by $1$\;
		}
	} 
 }
 Execute the TPS regularization based on Fourier basis as given in section 2 using the input image as $P_sI$, where $P_s = 1 - P_e$ and store the result in $h$\;
 The final estimate is $\hat{f} = g + h$.
\end{algorithm}

\end{document}